\newcommand{\onlySHORT}[1]{} \newcommand{\onlyFULL}[1]{#1}
\newtheorem{theorem}{Theorem}
\newtheorem{corollary}[theorem]{Corollary}
\newtheorem{lemma}[theorem]{Lemma}
\newtheorem{definition}[theorem]{Definition}
\newtheorem{observation}[theorem]{Observation}
\newtheorem{hypothesis}[theorem]{Hypothesis}
\let\leq\leqslant
\let\geq\geqslant
\let\le\leqslant
\let\ge\geqslant
\let\eps\varepsilon
\newcommand{\brac}[1]{{\left(#1\right)}}
\newcommand{\set}[1]{\left\{#1\right\}}
\newcommand{\floor}[1]{{\left\lfloor #1 \right\rfloor}}
\newcommand{\ceil}[1]{{\left\lceil #1 \right\rceil}}
\newcommand{\seq}[1]{\left\langle #1 \right\rangle}
\newcommand{\Oh}[1]{\mathcal{O}\left(#1\right)}
\newcommand{\oh}[1]{o\left(#1\right)}
\newcommand{\occ}{L}
\newcommand{\conc}{\circ}
\newcommand{\poly}{\mathrm{poly}}
\newcommand{\polylog}{\mathrm{polylog}}
\newcommand{\Nat}{\mathbb{N}}
\newcommand{\lcis}{\mathop{\mathrm{lcis}}}
\newcommand{\infl}{\mathop{\mathrm{inflate}}}
\newcommand{\nullv}{\mathop{\mathrm{null}}}
\newcommand{\dotp}{\boldsymbol{\cdot}}
\newcommand{\VG}{\mathrm{VG}}
\newcommand{\CG}{\mathrm{CG}}
\newcommand{\x}{\textsc{x}}
\newcommand{\y}{\textsc{y}}
\newcommand\ie{i.e\@ifnextchar.{}{.\@}}
\newcommand\etc{etc\@ifnextchar.{}{.\@}}
\newcommand\etal{et~al\@ifnextchar.{}{.\@}}
\title{Tight Conditional Lower Bounds for Longest Common Increasing Subsequence}
\author[1]{Lech Duraj\thanks{Partially supported by Polish National Science Center grant 2016/21/B/ST6/02165.}}
\author[2]{Marvin K\"unnemann}
\author[1]{Adam Polak\thanks{Partially supported by Polish Ministry of Science and Higher Education program {\em Diamentowy Grant}.}}
\affil[1]{
Theoretical Computer Science\\
Faculty of Mathematics and Computer Science\\
Jagiellonian University, Krak\'{o}w, Poland\\
\texttt{\{duraj,polak\}@tcs.uj.edu.pl}
}
\affil[2]{
Max Planck Institute for Informatics, Saarland Informatics Campus, Saarbr\"ucken, Germany\\
\texttt{marvin@mpi-inf.mpg.de}
}
\authorrunning{L. Duraj, M. K\"unnemann and A. Polak}
\subjclass{F.2.2 Nonnumerical Algorithms and Problems} 
\keywords{fine-grained complexity; combinatorial pattern matching; sequence alignments; parameterized complexity; SETH}
\begin{document}
\maketitle

\begin{abstract}
We consider the canonical generalization of the well-studied Longest Increasing Subsequence problem to multiple sequences, called $k$-LCIS: Given $k$ integer sequences $X_1,\dots,X_k$ of length at most $n$, the task is to determine the length of the longest common subsequence of $X_1,\dots,X_k$ that is also strictly increasing. Especially for the case of $k=2$ (called LCIS for short), several algorithms have been proposed that require quadratic time in the worst case.

Assuming the Strong Exponential Time Hypothesis (SETH), we prove a tight lower bound, specifically, that no algorithm solves LCIS in (strongly) subquadratic time. Interestingly, the proof makes no use of normalization tricks common to hardness proofs for similar problems such as LCS. We further strengthen this lower bound (1) to rule out $\Oh{ (nL)^{1-\varepsilon}}$ time algorithms for LCIS, where $L$ denotes the solution size, (2) to rule out $\Oh{n^{k-\varepsilon}}$ time algorithms for $k$-LCIS, and (3) to follow already from weaker variants of SETH. We obtain the same conditional lower bounds for the related Longest Common Weakly Increasing Subsequence problem.
\end{abstract}

\section{Introduction}

The longest common subsequence problem (LCS) and its variants are computational primitives with a variety of applications, which includes, e.g., uses as similarity measures for spelling correction~\cite{Morgan70,WagnerF74} or DNA sequence comparison~\cite{NeedlemanW70,AltschulGMML90}, as well as determining the differences of text files as in the UNIX \textsc{diff} utility~\cite{HuntMcI75}. LCS shares characteristics of both an easy and a hard problem: (\emph{Easy}) A simple and elegant dynamic-programming algorithm computes an LCS of two length-$n$ sequences in time $\Oh{n^2}$~\cite{WagnerF74}, and in many practical settings, certain properties of typical input sequences can be exploited to obtain faster, ``tailored'' solutions (e.g., ~\cite{Hirschberg77,HuntS77,ApostolicoG87,Myers86}; see also \cite{BergrothHR00} for a survey). (\emph{Hard}) At the same time, no polynomial improvements over the classical solution are known, thus exact computation may become infeasible for very long general input sequences. The research community has sought for a resolution of the question \emph{``Do subquadratic algorithms for LCS exist?''} already shortly after the formalization of the problem~\cite{ChvatalKK72,AhoHU76}.

Recently, an answer conditional on the Strong Exponential Time Hypothesis (SETH; see Section~\ref{sec:prelim} for a definition) could be obtained: Based on a line of research relating the satisfiability problem to quadratic-time problems~\cite{Williams05,RodittyVW13,Bringmann14,AbboudVWW14} and following a breakthrough result for Edit Distance~\cite{BackursI15}, it has been shown that unless SETH fails, there is no (strongly) subquadratic-time algorithm for LCS~\cite{AbboudBVW15,BringmannK15}. Subsequent work~\cite{AbboudHVWW16} strengthens these lower bounds to hold already under weaker assumptions and even provides surprising consequences of sufficiently strong polylogarithmic improvements.

Due to its popularity and wide range of applications, several variants of LCS have been proposed. This includes the heaviest common subsequence (HCS)~\cite{JacobsonV92}, which introduces weights to the problem, as well as notions that constrain the structure of the solution, such as the longest common increasing subsequence (LCIS)~\cite{YangHC05}, LCSk~\cite{BensonLMNS16}, constrained LCS~\cite{Tsai03,ChinSFHK04, ArslanE05}, restricted LCS~\cite{GotthilfHLL10}, and many other variants (see, e.g.,~\cite{ChenC11,AnnYT14,JiangLMZ04}). Most of these variants are (at least loosely) motivated by biological sequence comparison tasks. To the best of our knowledge, in the above list, LCIS is the only LCS variant for which (1) the best known algorithms run in quadratic time in the worst case and (2) its definition does not include LCS as a special case (for such generalizations of LCS, the quadratic-time SETH hardness of LCS~\cite{AbboudBVW15,BringmannK15} would transfer immediately). As such, it is open to determine whether there are (strongly) subquadratic algorithms for LCIS or whether such algorithms can be ruled out under SETH. The starting point of our work is to settle this question.

\subsection{Longest Common Increasing Subsequence (LCIS)}

The Longest Common Increasing Subsequence problem on $k$ sequences ($k$-LCIS) is defined as follows: Given integer sequences $X_1,\dots,X_k$ of length at most $n$, determine the length of the longest sequence $Z$ such that $Z$ is a strictly increasing sequence of integers and $Z$ is a subsequence of each $X_i, i\in\{1,\dots,k\}$. For $k=1$, we obtain the well-studied longest increasing subsequence problem (LIS; we refer to \cite{CrochemoreP10} for an overview), which has an $\Oh{n \log n}$ time solution and a matching lower bound in the decision tree model~\cite{Fredman75}. The extension to $k=2$, denoted simply as LCIS, has been proposed by Yang, Huang, and Chao~\cite{YangHC05}, partially motivated as a generalization of LIS and by potential applications in bioinformatics. They obtained an $\Oh{n^2}$ time algorithm, leaving open the natural question whether there exists a way to extend the near-linear time solution for LIS to a near-linear time solution for multiple sequences.

Interestingly, already a classic connection between LCS and LIS combined with a recent conditional lower bound of Abboud, Backurs and Vassilevska Williams~\cite{AbboudBVW15} yields a partial negative answer assuming SETH. 

\begin{observation}[Folklore reduction, implicit in \cite{HuntS77}, explicit in \cite{JacobsonV92}]\label{obs:lcs2lcis}
After $\Oh{kn^2}$ time preprocessing, we can solve $k$-LCS by a single call to $(k-1)$-LCIS on sequences of length at most $n^2$.
\end{observation}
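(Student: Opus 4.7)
The plan is to apply the classical LCS-to-LIS reduction of Hunt and Szymanski, extended to $k$ sequences. I would pick one of the sequences, say $X_1$, as a ``reference'' sequence and rewrite each remaining $X_i$ ($i \in \{2,\dots,k\}$) as a sequence of positions in $X_1$. Concretely, after one linear pass over $X_1$, I would precompute for each integer value $c$ the list $\mathrm{pos}(c)=(p_1^c<p_2^c<\dots<p_{m_c}^c)$ of positions where $c$ occurs in $X_1$; this costs $\Oh{n\log n}$ time (or $\Oh{n}$ with hashing). Then for each $i\ge 2$ I form $X_i'$ by scanning $X_i$ left to right and replacing every occurrence of a character $c$ by $\mathrm{pos}(c)$ written in \emph{decreasing} order. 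Each character of each $X_i$ contributes at most $|X_1|\le n$ integers, so every $X_i'$ has length at most $n^2$ and the whole construction runs in $\Oh{kn^2}$ time.

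The heart of the argument is the correspondence: an integer sequence $q_1<q_2<\dots<q_L$ is a strictly increasing common subsequence of $X_2',\dots,X_k'$ if and only if $X_1[q_1]\,X_1[q_2]\cdots X_1[q_L]$ is a common subsequence of $X_1,X_2,\dots,X_k$. The easy direction follows by construction: given a common subsequence witnessed in $X_1$ by positions $p_1<\dots<p_L$, set $q_j:=p_j$; within each $X_i'$ one can then select $q_j$ from the block $\mathrm{pos}(X_1[p_j])$ corresponding to the position in $X_i$ that witnesses the $j$-th character. For the nontrivial direction, the key is the decreasing-order trick: the positions $\mathrm{pos}(c)$ emitted from a single character $c$ of $X_i$ form a strictly decreasing block inside $X_i'$, so any strictly increasing subsequence of $X_i'$ can take at most one value from each such block. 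Consequently, a witness in $X_i'$ for $q_1<\dots<q_L$ selects positions $t_1<t_2<\dots<t_L$ in $X_i$ with $X_i[t_j]=X_1[q_j]$, which is precisely a common-subsequence witness for $X_i$.

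With this correspondence in place, a single call to $(k-1)$-LCIS on $(X_2',\dots,X_k')$ returns a number equal to the length of the $k$-LCS of $(X_1,\dots,X_k)$, and all quantitative claims of the observation follow. I do not foresee any real obstacle; the only conceptually delicate point is to use \emph{decreasing} (rather than increasing) order for $\mathrm{pos}(c)$, which is exactly what enforces the ``at most one per block'' property and thereby makes strictly increasing $q$-sequences in $X_i'$ correspond bijectively to subsequences of $X_i$.
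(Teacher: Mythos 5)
Your construction is exactly the paper's: the paper defines $\occ(\sigma)$ as the \emph{descending} list of positions of $\sigma$ in $X_1$ and sets $X_i' = \occ(X_i[0])\cdots\occ(X_i[|X_i|-1])$, which is precisely your decreasing-order $\mathrm{pos}(c)$ blocks, and the correspondence you establish is the same one the paper invokes. Your write-up just spells out the ``at most one element per decreasing block'' argument that the paper leaves as ``straightforward to see.''
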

\onlySHORT{A simple proof of this Observation can be found in the full version.}
\onlyFULL{
\begin{proof}
Let $\occ(\sigma)$ denote the descending sequence of positions $k$ with $X_1[k] = \sigma$. We define sequences $X_i' = \occ(X_i[0]) \cdots \occ(X_i[|X_i|-1])$ for all $i \in \{2,\dots,k\}$. It is straightforward to see that for any $\ell$, the length-$\ell$ increasing common subsequences of $X_2', \dots, X_k'$ are in one-to-one correspondence to length-$\ell$ common subsequences of $X_1,\dots,X_k$. Thus, the length of the LCIS of $X_2', \dots, X_k'$ is equal to the length of the LCS of $X_1,\dots,X_k$, and the claim follows since $|\occ(\sigma)| \le n$ for all $\sigma$.
\end{proof}}

\begin{corollary}
Unless SETH fails, there is no $\Oh{n^{\frac{3}{2}-\varepsilon}}$ time algorithm for LCIS for any constant $\varepsilon >0$.
\end{corollary}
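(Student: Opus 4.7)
The plan is to derive the corollary directly from Observation~\ref{obs:lcs2lcis} combined with the known SETH-based hardness of $k$-LCS from Abboud, Backurs and Vassilevska Williams~\cite{AbboudBVW15}. Their result states that, unless SETH fails, for every $k \ge 2$ there is no $\Oh{n^{k-\varepsilon}}$ time algorithm for $k$-LCS on sequences of length $n$; in particular, for $k = 3$, any algorithm for $3$-LCS requires time $n^{3-\oh{1}}$ under SETH.

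First I would instantiate Observation~\ref{obs:lcs2lcis} with $k=3$: this yields a reduction that, given three sequences of length $n$, produces in $\Oh{n^2}$ preprocessing time two sequences of length $N \le n^2$ whose LCIS encodes the $3$-LCS of the original inputs. Next, I would argue by contradiction: suppose there were an algorithm solving LCIS in time $\Oh{N^{3/2-\varepsilon}}$ for some $\varepsilon > 0$. Running this algorithm on the output of the reduction would solve $3$-LCS in time
\[
  \Oh{n^2} + \Oh{(n^2)^{3/2-\varepsilon}} = \Oh{n^{3-2\varepsilon}},
\]
which contradicts the $n^{3-\oh{1}}$ lower bound for $3$-LCS under SETH.

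There is essentially no obstacle here: the quantitative step is just the observation that the length blow-up from $n$ to $n^2$ in the reduction exactly doubles the exponent, so a subquadratic-exponent improvement (exponent $3/2-\varepsilon$) in LCIS translates into a subcubic-exponent improvement (exponent $3-2\varepsilon$) in $3$-LCS, which SETH rules out. The $\Oh{n^2}$ preprocessing cost is negligible compared to the $n^{3-2\varepsilon}$ term for small $\varepsilon$, so it does not interfere with the contradiction. The corollary follows.
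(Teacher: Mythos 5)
Your argument is correct and is essentially identical to the paper's own proof: both instantiate Observation~\ref{obs:lcs2lcis} with $k=3$, note that the length blow-up to $n^2$ turns an exponent of $\frac{3}{2}-\varepsilon$ into $3-2\varepsilon$, and invoke the SETH-hardness of $3$-LCS from~\cite{AbboudBVW15}. No issues.
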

\begin{proof}
Note that by the above reduction, an $\Oh{n^{\frac{3}{2}-\varepsilon}}$ time LCIS algorithm would give an $\Oh{n^{3-2\varepsilon}}$ time algorithm for 3-LCS. Such an algorithm would refute SETH by a result of Abboud et al.~\cite{AbboudBVW15}.
\end{proof}

While this rules out near-linear time algorithms, still an unsatisfying large polynomial gap between best upper and conditional lower bounds persists.

\subsection{Our Results}

Our first result is a tight SETH-based lower bound for LCIS.
\begin{theorem}\label{thm:lcis}
Unless SETH fails, there is no $\Oh{n^{2-\varepsilon}}$ time algorithm for LCIS for any constant $\varepsilon > 0$.
\end{theorem}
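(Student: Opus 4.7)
The plan is a fine-grained reduction from Orthogonal Vectors (OV): given sets $A = \{a_1, \dots, a_n\}$ and $B = \{b_1, \dots, b_n\} \subseteq \{0,1\}^d$ with $d = \polylog(n)$, decide whether some pair $(i,j)$ has $\langle a_i, b_j\rangle = 0$. Under SETH, OV requires $n^{2-\oh{1}}$ time, so it suffices to construct in linear time two integer sequences $X, Y$ each of length $n \cdot d^{\Oh{1}} = n^{1+\oh{1}}$ such that $\lcis(X, Y)$ crosses a prescribed threshold iff some pair is orthogonal; a strongly subquadratic LCIS algorithm would then refute SETH via this reduction.

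First I would design a \emph{coordinate gadget} that encodes one bit of the inner product on each side. For each coordinate $c \in \{1, \dots, d\}$ reserve integer labels $L_c < H_c$, chosen so that all labels for coordinate $c$ lie strictly below those for $c+1$, and set
\[
\CG_x(0) := H_c, \qquad \CG_x(1) := L_c, \qquad \CG_y(0) := L_c\, H_c, \qquad \CG_y(1) := H_c.
\]
A four-case check yields $\lcis(\CG_x(\alpha), \CG_y(\beta)) = 1 - \alpha\beta$. Concatenating these in coordinate order gives \emph{vector gadgets} $\VG_x(a), \VG_y(b)$; because the per-coordinate ranges are disjoint and increasing, any LCIS decomposes coordinate-wise, so $\lcis(\VG_x(a), \VG_y(b)) = d - \langle a, b\rangle$, which equals $d$ iff $a \perp b$ and is at most $d - 1$ otherwise.

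Next comes the \emph{alignment layer}, which weaves the $n$ vector gadgets on each side into $X, Y$ using short sequences of ``guard'' values drawn from a range disjoint from the coordinate labels, of the form
\[
X = U_0^x \cdot \VG_x(a_1) \cdot U_1^x \cdots U_{n-1}^x \cdot \VG_x(a_n) \cdot U_n^x,
\]
and symmetrically for $Y$. I would pick the guard values so that the optimal LCIS (i) uses coordinate labels from exactly one pair $(\VG_x(a_i), \VG_y(b_j))$, contributing $d - \langle a_i, b_j\rangle$, and (ii) collects exactly the same number $T$ of guard values regardless of $(i, j)$. This yields $\lcis(X, Y) = T + d - \min_{i,j}\langle a_i, b_j\rangle$, which reaches $T + d$ iff an orthogonal pair exists.

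The main obstacle is the alignment layer. Every pair $(i, j)$ must be reachable, the guard contribution $T$ must be pair-independent, and the total length must stay $n \cdot d^{\Oh{1}}$---so the na\"{\i}ve idea of enumerating all $n^2$ pairs in the guards (which would blow the length up to $\Omega(n^2 d)$ and break tightness) is ruled out. The LCS-style ``padding-by-repetition'' trick is also unavailable because LCIS demands strictly increasing values. However, this same monotonicity works against \emph{mixing}: any common increasing subsequence that draws coordinates from two distinct $\VG_x$-gadgets must step through higher-valued guard positions between them, and if the guards are chosen with a carefully staggered increase, this step costs more than the mixing gains. A staggered assignment thus forces every single-pair alignment to collect exactly $T$ guards while rendering every mixing strategy strictly suboptimal---which is precisely why, as the abstract advertises, no normalization tricks are required: the strict-increase constraint of LCIS itself plays the role that alignment strings play in LCS hardness reductions.
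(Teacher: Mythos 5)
Your high-level plan is the right one and matches the paper's: reduce from OV, build coordinate gadgets over disjoint increasing ranges so that $\lcis(\VG_\x(a),\VG_\y(b)) = d - \langle a,b\rangle$, and interleave the $n$ vector gadgets with padding whose contribution to the optimum is independent of which pair $(i,j)$ is selected. The vector-gadget layer is essentially identical to the paper's and is fine.

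The gap is in the alignment layer, which you correctly identify as the main obstacle but then resolve only by assertion. You need guard sequences $U_0^\x,\dots,U_n^\x$ and $U_0^\y,\dots,U_n^\y$ such that (a) for \emph{every} pair $(i,j)$ the guards collected around the aligned pair $(\VG_\x(a_i),\VG_\y(b_j))$ number exactly $T$, (b) mixing across gadgets is strictly suboptimal, and (c) the total guard length is $n^{1+\oh{1}}$. Requirement (a) is the hard part: the guards preceding the chosen pair can only come from $U_0^\x\cdots U_i^\x$ and $U_0^\y\cdots U_j^\y$, so their contribution is some function $f(i,j)$ of the two prefixes, and you must arrange that the prefix contribution plus the suffix contribution is constant over all $(i,j)$. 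The natural way to do this is to make the prefix contribution an affine function of $i+j$ and the suffix contribution the complementary affine function of $i+j$ --- but a separator family with $\lcis(U_0^\x\cdots U_i^\x,\, U_0^\y\cdots U_j^\y) = i+j+\mathrm{const}$ already forces $|U_0^\x| = \Omega(n)$ (take $i=0$, $j=n-1$), so keeping $\sum_i |U_i^\x| = n^{1+\oh{1}}$ while meeting this constraint is genuinely delicate. ``A carefully staggered increase'' does not obviously exist; the paper devotes its main technical effort (Section~\ref{sec:sepseq}) to constructing exactly these separator sequences by a recursive doubling with an inflation operation and tail gadgets, producing blocks of heavily varying sizes with total length $\Oh{n\log n}$, and proving the prefix property by a two-level induction. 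It also needs \emph{two} families of separators per side --- one with small values placed before each vector gadget (controlling prefixes) and one with large values placed after (controlling suffixes, obtained by reversal and negation) --- so that the $i+j$ dependence cancels in the sum $\ell(i+j+n) + \ell(2(n-1)-(i+j)+n) = \ell(4n-2)$. Your single-guard-per-slot sketch does not explain how this cancellation is achieved, and without an explicit construction the reduction is incomplete precisely at its crux.
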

We extend our main result in several directions.

\subsubsection{Parameterized Complexity I: Solution Size} 

Subsequent work~\cite{ChanZFYZ07,KutzBKK11} improved over Yang et al.'s algorithm when certain input parameters are small. Here, we focus particularly on the solution size, i.e., the length $L$ of the LCIS. Kutz et al.~\cite{KutzBKK11} provided an algorithm running in time $\Oh{nL\log \log n + n\log n}$. If $L$ is small compared to its worst-case upper bound of $n$, say $L = n^{\frac{1}{2}\pm o(1)}$, this algorithm runs in strongly subquadratic time. Interestingly, exactly for this case, the reduction from 3-LCS to LCIS of Observation~\ref{obs:lcs2lcis} already yields a matching SETH-based lower bound of $(Ln)^{1-o(1)} = n^{\frac{3}{2}-o(1)}$. However, for smaller $L$, this reduction yields no lower bound at all and only a non-matching lower bound for larger $L$. We remedy this situation by the following result.\footnote{We mention in passing that a systematic study of the complexity of LCS in terms of such input parameters has been performed recently in~\cite{BringmannK17}.}

\begin{theorem}\label{thm:lcis-nL}
Unless SETH fails, there is no $\Oh{(nL)^{1-\varepsilon}}$ time algorithm for LCIS for any constant $\varepsilon>0$. This even holds restricted to instances with $L = n^{\gamma \pm o(1)}$, for \emph{arbitrarily chosen $0 < \gamma \le 1$}.
\end{theorem}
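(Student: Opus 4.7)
The plan is to reduce from an \emph{unbalanced} variant of the Orthogonal Vectors Hypothesis, which follows from SETH: for any $\gamma \in (0, 1]$, distinguishing whether sets $A, B \subseteq \{0,1\}^d$ of sizes $|A| = n^{\gamma}$ and $|B| = n$ with $d = \omega(\log n)$ contain an orthogonal pair requires $n^{(1+\gamma) - o(1)}$ time. The deduction from balanced OVH is a standard splitting argument: a hypothetical $\Oh{n^{1+\gamma-\delta}}$ algorithm for the unbalanced problem, applied to each of $n^{1-\gamma}$ groups of $n^\gamma$ vectors obtained by partitioning one side of a balanced OV instance, would yield an $\Oh{n^{2-\delta}}$ algorithm for balanced OV on $n$ vectors, contradicting OVH.

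The core step is an asymmetric version of the reduction underlying Theorem~\ref{thm:lcis}. Given an unbalanced OV instance with $|A| = a \leq b = |B|$, I would adapt the construction so that it produces LCIS sequences $X_1, X_2$ of lengths $\Oh{a \polylog n}$ and $\Oh{b \polylog n}$ respectively, with LCIS value $L = \Theta(a)$, where the threshold value is attained if and only if $A$ and $B$ contain an orthogonal pair. OVH-based reductions of this shape are typically built from one ``vector gadget'' per vector in $A$ concatenated on one side, one per vector in $B$ on the other, and a common inner interaction layer of polylogarithmic size; under this structure the asymmetric form should emerge once one allows $a \neq b$.

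Applying this asymmetric reduction with $a = n^\gamma$ and $b = n$ produces an LCIS instance with $N := \max(|X_1|, |X_2|) = \Oh{n \polylog n}$ and $L = \Theta(n^\gamma)$; in particular $L = N^{\gamma \pm o(1)}$. An $\Oh{(NL)^{1-\varepsilon}}$ algorithm would then decide the unbalanced OV instance in time $\Oh{n^{(1+\gamma)(1-\varepsilon) + o(1)}}$, which for any fixed $\varepsilon > 0$ is strictly less than $n^{(1+\gamma) - o(1)}$, refuting the unbalanced OVH and hence SETH.

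The main technical obstacle is verifying the asymmetric form of the base reduction: one needs to check that the sequence-length contributions on the two sides scale with $a$ and $b$ independently, while the LCIS value remains proportional to $a = \min(a,b)$. If the original construction is inherently symmetric, some gadgets may need to be redesigned, or one side padded so that $|X_2| = \Theta(b \polylog n)$ without affecting $L$. Once this bookkeeping is settled, the parameter calculation above delivers the theorem.
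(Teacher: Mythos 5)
Your starting point (the unbalanced OV hypothesis, i.e.\ Hypothesis~\ref{hyp:ov} with $|\mathcal{U}_1|=n$, $|\mathcal{U}_2|=n^{\gamma}$) and your final parameter arithmetic match the paper's proof exactly. However, the entire content of the theorem lives in the step you defer as ``the main technical obstacle,'' and the specific route you propose for it does not work. You want the short side of the OV instance to yield a genuinely short sequence, $|X_1|=\Oh{n^{\gamma}\polylog n}$. But the separator-sequence machinery (Lemma~\ref{lem:2-lcis-main}) that makes the reduction of Theorem~\ref{thm:lcis} sound requires $\lcis(\alpha_0,\beta_0\cdots\beta_{N-1})$ to already be proportional to the total number $N$ of blocks on the \emph{other} side; hence the very first block of each separator sequence has length $\Omega(N)$, and any sequence hosting $n$ vector gadgets forces \emph{both} sequences to have length $\Omega(n)$. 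Making one sequence short is therefore a dead end within this framework. Symmetrically, if you simply run the construction of Theorem~\ref{thm:lcis} with $n$ blocks to accommodate all of $\mathcal{U}$, the additive constant $C=\Theta(\ell n)$ forces $L=n^{1+o(1)}$, not $n^{\gamma\pm o(1)}$, so the theorem's restriction on $L$ is violated.

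The idea you are missing is how to keep the number of separator blocks at $m=n^{\gamma}$ while still encoding all $n$ vectors of $\mathcal{U}$. The paper does this by \emph{grouping}: it packs $q=n^{1-\gamma}$ consecutive vector gadgets $U_{lq},\dots,U_{lq+q-1}$ into a single block $\bar U_l$, shifting successive gadgets \emph{downward} so that no two of them can jointly contribute to an increasing subsequence, and correspondingly replicates each $V_j$ as $q$ copies with the matching shifts inside $\bar V_j$. Then $\lcis(\bar U_l,\bar V_j)=\max_{lq\le i<lq+q}\lcis(U_i,V_j)$, each $\bar U_l,\bar V_j$ still has no increasing subsequence longer than $2d$, and Lemma~\ref{lem:2-lcis-main} applied to these $m$ pairs gives two sequences, \emph{both} of length $\Oh{nd\log n}$, with LCIS value $C+d-\min_{i,j}(u_i\dotp v_j)$ where $C=\Oh{md}$, hence $L=n^{\gamma\pm o(1)}$ as required. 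So the asymmetry is realized not in the sequence lengths (which stay balanced at $n^{1+o(1)}$) but in the solution size $L$. Without this grouping device, or some replacement for it, your sketch does not establish the theorem.
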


\subsubsection{Parameterized Complexity II: $k$-LCIS}

For constant $k\ge 3$, we can solve $k$-LCIS in $\Oh{n^k \polylog(n)}$ time~\cite{ChanZFYZ07,KutzBKK11}, or even $\Oh{n^k}$ time
\onlySHORT{(see the appendix in the full version).}
\onlyFULL{(see the appendix).}
While it is known that $k$-LCS cannot be computed in time $\Oh{n^{k-\varepsilon}}$ for any constant $\varepsilon >0, k\ge 2$ unless SETH fails~\cite{AbboudBVW15}, this does not directly transfer to $k$-LCIS, since the reduction in Observation~\ref{obs:lcs2lcis} is not tight. However, by extending our main construction, we can prove the analogous result.

\begin{theorem}\label{thm:klcis}
Unless SETH fails, there is no $\Oh{n^{k-\varepsilon}}$ time algorithm for $k$-LCIS for any constant $k\ge 3$ and $\varepsilon >0$. 
\end{theorem}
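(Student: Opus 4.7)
My plan is to extend the construction used for Theorem~\ref{thm:lcis} from two to $k$ sequences, mirroring the standard approach for lifting quadratic SETH lower bounds to $n^{k-\varepsilon}$ lower bounds via a split-and-list argument (as in Abboud, Backurs and Vassilevska Williams~\cite{AbboudBVW15} for $k$-LCS). Starting from a SETH-hard $k$-SAT instance on $N$ variables and $M$ clauses, I partition the variables into $k$ groups of size $N/k$ and enumerate the $2^{N/k}$ partial assignments for each group. Each group gives rise to one sequence $X_i$ of length $n=2^{N/k}\cdot\poly(N,M)$, and the target is to ensure that the length of the $k$-LCIS of $X_1,\dots,X_k$ reaches a designated threshold if and only if some combination $(\alpha_1,\dots,\alpha_k)$ of partial assignments satisfies every clause. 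An $\Oh{n^{k-\varepsilon}}$ algorithm would then solve $k$-SAT in time $2^{N(1-\varepsilon/k)}\poly(N,M)$, refuting SETH.

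Concretely, each sequence $X_i$ is a concatenation of ``assignment blocks'', one per partial assignment $\alpha_i$, separated by large sentinel values that force any long common increasing subsequence to pick at most one block from each sequence consistently. Inside a block I reuse the clause-by-clause gadget from the proof of Theorem~\ref{thm:lcis}, but with each clause gadget expanded so that it can be matched across all $k$ sequences at once: for every clause $C$, the length of the portion of the LCIS contributed by $C$ reaches the designated value precisely when at least one of the $k$ partial assignments sets a literal of $C$ to true. Combining the $M$ clause gadgets together with appropriate padding, the total LCIS through a chosen $k$-tuple of blocks hits the threshold iff the corresponding $(\alpha_1,\dots,\alpha_k)$ jointly satisfies $F$.

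The main obstacle, compared to the $k=2$ case, is enforcing the $k$-way alignment: I need that any optimal LCIS picks exactly one block per sequence and that the values inside chosen blocks mesh so that the strictly increasing constraint is respected simultaneously in all $k$ sequences. This typically requires choosing the numerical offsets between consecutive blocks (and between consecutive clause gadgets inside a block) large enough to forbid any ``cross-block'' matches, while still keeping all values polynomially bounded so that $n=2^{N/k}\cdot\poly(N,M)$. A second subtle point is that the $k=2$ construction of Theorem~\ref{thm:lcis} is designed without normalization tricks, so its clause-gadget behavior under a $k$-way alignment must be re-examined to confirm that the ``correct'' length is attained in exactly the satisfying case and strictly less otherwise.

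Once the alignment is correctly set up, the completeness and soundness analyses parallel those of Theorem~\ref{thm:lcis} but accounting for $k$ sequences rather than two, and the lower bound on $k$-LCIS follows by the parameter-counting calculation sketched above.
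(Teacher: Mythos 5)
Your outer shell (split-and-list from $k$-SAT, one sequence per variable group, threshold reached iff some $k$-tuple of partial assignments is jointly satisfying, and the exponent arithmetic) is fine, and your ``clause gadgets'' are essentially the paper's $k$-OV coordinate gadgets in disguise: the paper also goes through $k$-OV, and satisfaction of a clause by at least one group corresponds exactly to a coordinate product being $0$. However, you misdescribe the base construction: the proof of Theorem~\ref{thm:lcis} contains no clause-by-clause gadget to ``reuse''; it consists of short vector gadgets plus a separate and much more delicate \emph{separator sequence} construction, and it is the latter that carries all the difficulty.

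That is the genuine gap. Your mechanism for enforcing the $k$-way alignment --- ``large sentinel values'' and ``numerical offsets between consecutive blocks large enough to forbid cross-block matches'' --- is precisely the naive approach that the paper explains does not work for LCIS. A common \emph{strictly increasing} subsequence uses each value at most once, so separators cannot be repeated high-multiplicity symbols as in LCS reductions; and if the padding between blocks contributes to the LCIS at all, the amount it contributes depends on \emph{which} block indices are chosen, so the optimum need not be attained at the satisfying tuple, whereas if it contributes nothing it enforces nothing. The paper resolves this with separator sequences $A=\alpha_0\cdots\alpha_{n-1}$, $B=\beta_0\cdots\beta_{n-1}$ satisfying $\lcis(\alpha_0\cdots\alpha_i,\beta_0\cdots\beta_j)=i+j+\mathrm{const}$, together with mirrored suffix versions, so that the prefix and suffix contributions always sum to a quantity \emph{independent} of the chosen indices, while the total length stays $n^{1+o(1)}$ even though condition~(1) forces single blocks of length $\Omega(n)$. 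This needs a nontrivial inductive doubling construction (inflation plus tail gadgets), and for $k\ge 3$ it must be generalized further: the tail gadgets become $2^k-1$ fresh symbols distributed among the $k$ sequences according to binary representations, so that the identity $\lcis = j_1+\cdots+j_k+N$ holds for all $k$ prefixes simultaneously, and a matching two-case soundness argument (last element below or above the tail-gadget threshold) is required. None of this is present in, or implied by, your sketch, so the soundness direction of your reduction is not established.
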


\subsubsection{Longest Common Weakly Increasing Subsequence (LCWIS)}

We consider a closely related variant of LCIS called the Longest Common Weakly Increasing Subsequence ($k$-LCWIS): Here, given integer sequences $X_1,\dots,X_k$ of length at most $n$, the task is to determine the longest \emph{weakly increasing} (i.e.~non-decreasing) integer sequence $Z$ that is a common subsequence of $X_1,\dots,X_k$. Again, we write LCWIS as a shorthand for 2-LCWIS.
Note that the seemingly small change in the notion of increasing sequence has a major impact on algorithmic and hardness results: Any instance of LCIS in which the input sequences are defined over a small-sized alphabet $\Sigma \subseteq \mathbb{Z}$, say $|\Sigma| = \Oh{n^{1/2}}$, can be solved in strongly subquadratic time $\Oh{nL \log n} = \Oh{n^{3/2} \log n}$~\cite{KutzBKK11}, by using the fact that $L \le |\Sigma|$. In contrast, LCWIS is quadratic-time SETH hard already over slightly superlogarithmic-sized alphabets~\cite{Polak17}. We give a substantially different proof for this fact and generalize it to $k$-LCWIS.

\begin{theorem}\label{thm:lcwis}
Unless SETH fails, there is no $\Oh{n^{k-\varepsilon}}$ time algorithm for $k$-LCWIS for any constant $k\ge 3$ and $\varepsilon >0$. This even holds restricted to instances defined over an alphabet of size $|\Sigma| \le f(n) \log n$ for any function $f(n) = \omega(1)$ growing arbitrarily slowly.
\end{theorem}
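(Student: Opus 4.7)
My plan is to prove Theorem~\ref{thm:lcwis} by adapting the $k$-LCIS reduction that underlies Theorem~\ref{thm:klcis} to produce $k$-LCWIS instances over a very small alphabet. The starting point is the $k$-OV problem on $k$ lists of $n$ Boolean vectors of dimension $d$, which under SETH admits no $\Oh{n^{k-\varepsilon}}$ time algorithm for any $d = \omega(\log n)$. Choosing $d = f(n) \log n / C$ for a suitable constant $C$ and reducing to $k$-LCWIS instances of length $\poly(n,d)$ whose alphabet has size $\Oh{d}$ would immediately yield the claimed bound $|\Sigma| \le f(n) \log n$.

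The construction would mirror the hierarchical skeleton used for $k$-LCIS: each sequence $X_i$ is a concatenation of \emph{vector gadgets}, one per vector in the $i$-th list, each vector gadget is a concatenation of \emph{coordinate gadgets}, and the monotonicity constraint forces any LCWIS of a carefully chosen target length $L^{\star}$ to commit to one vector gadget per sequence and then witness, coordinate by coordinate, that the chosen vectors are orthogonal. The essential new idea, tailored to \emph{weak} monotonicity, is that each coordinate can be encoded by a constant-alphabet pattern rather than by fresh distinct values: weakly increasing subsequences may repeat any value, so the identity of a vector across different gadgets can be recovered from the \emph{multiplicities} of a few symbols, not from their identities. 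Concretely, I would let coordinate $j$ draw its symbols from a constant-size set depending only on $j$, giving $|\Sigma| = \Oh{d}$ overall.

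Correctness would split into the usual two directions: an orthogonal $k$-tuple induces a weakly increasing common subsequence of length $L^{\star}$ by matching one vector gadget per sequence coordinate-by-coordinate, while any LCWIS of length $L^{\star}$ must, by the structure of the separators, align one vector gadget per sequence and spend exactly the target number of matches on each coordinate, which is feasible precisely when the selected vectors have no coordinate that is simultaneously $1$ in all of them. Since Observation~\ref{obs:lcs2lcis} does not help here — it would blow up both the length and the alphabet — the alignment machinery must be rebuilt from scratch inside the weakly increasing setting.

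The main obstacle is that weak increase deprives us of the ``use once'' property of strictly increasing sequences, which in the $k$-LCIS proof conveniently forbids cross-gadget alignments. I expect to recover a similar rigidity by a staircase value pattern in which coordinate $j$ uses values from a small range around $j$, so that any weakly increasing common subsequence can only harvest coordinate-$j$ symbols from the $j$-th coordinate gadget of a committed vector; jumping across coordinates or between vector gadgets either violates monotonicity or strictly shortens the subsequence. Forcing this invariant to hold \emph{uniformly} across all $k$ sequences is the delicate combinatorial point, and I anticipate it to dictate the constant $C$ in the choice of $d$ and hence the precise multiplicative factor relating $d$ to $|\Sigma|$.
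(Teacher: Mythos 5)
Your high-level plan (reduce from $k$-OV with $d$ slightly superlogarithmic, reuse the $k$-LCIS skeleton of separators plus vector gadgets built from coordinate gadgets, and note that the coordinate gadgets only need $\Oh{d}$ symbols) matches the paper's intent, but it misses the one idea that actually makes the small-alphabet claim possible, and that idea lives in the \emph{separator sequences}, not in the vector gadgets. In the $k$-LCIS construction the separators are built by repeatedly applying $\infl$, and each application \emph{doubles the alphabet}: the recurrence $S(2N)=2S(N)+2^k-1$ gives separator alphabets of size $\Theta(n)$. Your proposal never says how the separators — which are the mechanism that forces commitment to a single vector gadget per sequence, i.e.\ exactly the ``delicate combinatorial point'' you defer — are to be realized over $\Oh{f(n)\log n}$ symbols. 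The paper's key insight is to replace $\infl(\seq{a_0,\dots,a_{n-1}})=\seq{2a_0-1,2a_0,\dots}$ by $\mathrm{inflate}'(\seq{a_0,\dots,a_{n-1}})=\seq{a_0,a_0,\dots,a_{n-1},a_{n-1}}$, which still doubles the LCWIS (repetitions are allowed in a weakly increasing sequence) but does not enlarge the alphabet; then each of the $\log n$ doubling steps adds only $2^k-1$ fresh symbols, giving separators over $\Oh{\log n}$ symbols and a total alphabet of $\Oh{\log n + d}$. Without this replacement (or an equivalent device) your construction's alphabet is $\Omega(n)$ and the theorem's restriction cannot be met; note also that this trick is inherently a \emph{weak}-increase phenomenon, which is consistent with the fact that LCIS over alphabets of size $n^{o(1)}$ is solvable in $n^{1+o(1)}$ time, so no strictly-increasing-style ``staircase rigidity'' can survive over a small alphabet.

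A secondary, smaller gap: your invariant that a weakly increasing common subsequence ``can only harvest coordinate-$j$ symbols from the $j$-th coordinate gadget of a committed vector'' is false as stated, because under weak monotonicity the same value may be re-collected from the coordinate-$j$ blocks of \emph{several} consecutive vector gadgets. This is exactly the re-use issue the paper flags as requiring careful verification; it is handled by the separator accounting (the analogue of Case~2 in Lemma~\ref{lem:2-lcis-main}, with $\delta$ re-bounded for weakly increasing subsequences of concatenated gadgets), not by the value ranges of the coordinate gadgets alone. You should make both of these points explicit before the argument can be considered complete.
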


\subsubsection{Strengthening the Hardness}

In an attempt to strengthen the conditional lower bounds for Edit Distance and LCS~\cite{BackursI15,AbboudBVW15,BringmannK15}, particularly, to obtain barriers even for subpolynomial improvements, Abboud, Hansen, Vassilevska Williams, and Williams~\cite{AbboudHVWW16} gave the first fine-grained reductions from the satisfiability problem on branching programs. Using this approach, the quadratic-time hardness of a problem can be explained by considerably weaker variants of SETH, making the conditional lower bound stronger. We show that our lower bounds also hold under these weaker variants. In particular, we prove the following.

\begin{theorem}\label{thm:lcis-bp}
There is no strongly subquadratic time algorithm for LCIS, unless there is, for some $\eps > 0$, an $\Oh{(2-\eps)^N}$ algorithm for the satisfiability problem on branching programs of width $W$ and length $T$ on $N$ variables with $(\log W)(\log T) = \oh{N}$.
\end{theorem}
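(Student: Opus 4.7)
The plan is to follow the template of Abboud, Hansen, Vassilevska Williams, and Williams~\cite{AbboudHVWW16} for promoting SETH-based reductions to reductions from branching program satisfiability. The reduction behind Theorem~\ref{thm:lcis} follows the standard "split-and-list" pattern: one enumerates the $2^{N/2}$ partial assignments $\alpha$ to the first half of the CNF-SAT variables and the $2^{N/2}$ partial assignments $\beta$ to the second half, writes an $\alpha$-gadget into the first LCIS sequence and a $\beta$-gadget into the second, and designs these gadgets so that the LCIS length exceeds a designated threshold iff some pair $(\alpha,\beta)$ jointly satisfies every clause. The per-gadget cost is $\poly(M)$, where $M$ is the clause count. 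My goal is to keep the outer enumeration skeleton intact and replace only the inner, per-$(\alpha,\beta)$ acceptance test by one that certifies acceptance of a branching program.

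Given a branching program of width $W$ and length $T$ on $N$ variables, I split the variables into halves $A$ and $B$. For each $\alpha \in \{0,1\}^{A}$, I partially evaluate the program, following the $\alpha$-transitions whenever an $A$-variable is queried and leaving $B$-transitions symbolic; this yields a compressed description of dimension $d = \poly(W, T)$ summarizing the program's action on all of $[W]$, and analogously for each $\beta$. Whether $(\alpha,\beta)$ leads to an accepting computation then becomes a combinatorial predicate on pairs of dimension-$d$ vectors, precisely the kind of predicate that \cite{AbboudHVWW16} rewrites as an Orthogonal-Vectors-style coordinate check. This predicate is what I plug in place of the clause-satisfaction check in the Theorem~\ref{thm:lcis} proof.

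The main obstacle will be rebuilding the inner gadget so that it correctly certifies BP acceptance under the strictly-increasing constraint of LCIS. Because the proof of Theorem~\ref{thm:lcis} deliberately avoids normalization tricks and tailors its gadget structure to CNF clauses, a direct port is unlikely; instead I expect to reimplement the gadget as a carefully arranged cascade over the $T$ BP-layers, each of size $\poly(W)$, reusing the original vector-gadget primitive on state-transition tables. The critical property to verify is that no spurious common increasing subsequence crossing multiple layers or multiple $(\alpha,\beta)$ pairs can inflate the LCIS length beyond the intended threshold, i.e., the completeness/soundness gap of the base construction must survive the gadget swap.

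The resulting LCIS instance has length $n = 2^{N/2} \cdot \poly(W,T)$. A hypothetical $\Oh{n^{2-\eps}}$ algorithm would solve BP-SAT in time $2^{N(1-\eps/2)} \cdot \poly(W,T)^{2-\eps}$. Under the assumption $(\log W)(\log T) = \oh{N}$, the nontrivial case has $\log W, \log T \ge 1$, hence $\log W + \log T \le 2(\log W)(\log T) = \oh{N}$, so $\poly(W,T) = 2^{\oh{N}}$; the total running time becomes $2^{N(1 - \eps/2) + \oh{N}} = (2-\eps')^{N}$ for some $\eps' > 0$, contradicting the hypothesis on BP-SAT and yielding Theorem~\ref{thm:lcis-bp}.
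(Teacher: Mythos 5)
Your high-level plan (split-and-list over the two halves of the variables, swap the vector gadgets for gadgets certifying branching-program acceptance, keep the outer combination machinery) matches the paper's, and your final runtime arithmetic is fine. But there is a genuine gap at the heart of the construction: you assert that, after partially evaluating the program on $\alpha$ and on $\beta$, acceptance of $(\alpha,\beta)$ ``becomes a combinatorial predicate on pairs of dimension-$d$ vectors, precisely the kind of predicate that \cite{AbboudHVWW16} rewrites as an Orthogonal-Vectors-style coordinate check,'' with $d=\poly(W,T)$. This is not so. What $\alpha$ and $\beta$ each determine is a collection of layer-to-layer transition maps, and acceptance is the \emph{composition} of $T$ such maps interleaved between the two halves --- an inherently iterated computation, not a coordinate-wise check over $\poly(W,T)$ coordinates. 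Your fallback of a ``cascade over the $T$ BP-layers, each of size $\poly(W)$'' does not work either: a linear concatenation of per-layer gadgets gives the alignment no way to enforce consistency of the intermediate state at each layer boundary. The construction that actually works (in \cite{AbboudHVWW16} and in the paper) is a recursive doubling over layers: reachability from $u$ to $v$ across $2^k$ layers is an OR, over the $W$ nodes $w$ of the midpoint layer, of (reach $w$ from $u$) AND (reach $v$ from $w$); the AND is concatenation with shifted alphabets, and the OR over the $W$ branches is implemented by re-applying the separator-sequence machinery of Lemma~\ref{lem:2-lcis-main} at \emph{every level} of the recursion, with a base case for adjacent layers. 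Each level multiplies the gadget length by $\Oh{W\log W}$, so after $\log T$ levels the gadgets have length $T^{\Oh{\log W}} = 2^{\Oh{(\log W)(\log T)}}$ --- not $\poly(W,T)$.

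This matters for the statement you are proving: if the gadgets really had size $\poly(W,T)$, the theorem would hold under the much weaker hypothesis $\log(WT)=\oh{N}$; the condition $(\log W)(\log T)=\oh{N}$ in the statement is exactly what is needed to absorb the $2^{\Oh{(\log W)(\log T)}}$ gadget length into $(2-\eps')^N$. Your closing inequality $\log W + \log T \le 2(\log W)(\log T)$ happens to rescue the arithmetic, but only because you underestimated the gadget size; the reduction you describe has not actually been constructed, and the missing piece --- the recursive reachability gadget built by repeated invocation of Lemma~\ref{lem:2-lcis-main} --- is the entire technical content of the proof.
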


\subsection{Discussion, Outline and Technical Contributions}


Apart from an interest in LCIS and its close connection to LCS, our work is also motivated by an interest in the \emph{optimality of dynamic programming (DP) algorithms}\footnote{We refer to~\cite{ZhuWWW16} for a simple quadratic-time DP formulation for LCIS.}. Notably, many conditional lower bounds in $\mathsf{P}$ target problems with natural DP algorithms that are proven to be near-optimal under some plausible assumption (see, e.g., \cite{Bringmann14, AbboudVWW14, BackursI15, BackursI16, AbboudBVW15, BringmannK15, BackursT17, CyganMWW17, KunnemannPS17} and \cite{VassilevskaW15} for an introduction to the field). Even if we restrict our attention to problems that find optimal sequence alignments under some restrictions, such as LCS, Edit Distance and LCIS, the currently known hardness proofs differ significantly, despite seemingly small differences between the problem definitions. Ideally, we would like to classify the properties of a DP formulation which allow for matching conditional lower bounds.

One step in this direction is given by the \emph{alignment gadget framework}~\cite{BringmannK15}. Exploiting normalization tricks, this framework gives an abstract property of sequence similarity measures to allow for SETH-based quadratic lower bounds. Unfortunately, as it turns out, we cannot directly transfer the alignment gadget hardness proof for LCS to LCIS -- some indication for this difficulty is already given by the fact that LCIS can be solved in strongly subquadratic time over sublinear-sized alphabets~\cite{KutzBKK11}, while the LCS hardness proof already applies to binary alphabets. By collecting gadgetry needed to overcome such difficulties (that we elaborate on below), we hope to provide further tools to generalize more and more quadratic-time lower bounds based on SETH.

\subsubsection{Technical Challenges}

The known conditional lower bounds for global alignment problems such as LCS and Edit Distance work as follows. The reductions start from the quadratic-time SETH-hard Orthogonal Vectors problem (OV), that asks to determine, given two sets of $(0,1)$-vectors $\mathcal{U} = \{u_0, \ldots, u_{n-1}\}, \mathcal{V} = \{v_0, \ldots, v_{n-1}\} \subseteq \{0,1\}^d$ over $d=n^{o(1)}$ dimensions, whether there is a pair $i,j$ such that $u_i$ and $v_j$ are orthogonal, i.e., whose inner product $(u_i\dotp v_j) := \sum_{k=0}^{d-1} u_i[k]\cdot v_j[k]$ is 0 (over the integers). Each vector $u_i$ and $v_j$ is represented by a (normalized) vector gadget $\VG_\x(u_i)$ and $\VG_\y(v_j)$, respectively. Roughly speaking, these gadgets are combined to sequences $X$ and $Y$ such that each candidate for an optimal alignment of $X$ and $Y$ involves locally optimal alignments between $n$ pairs $\VG_\x(u_i), \VG_\y(v_j)$ -- the optimal alignment exceeds a certain threshold if and only if there is an orthogonal pair $u_i,v_j$.

An analogous approach does not work for LCIS: Let $\VG_\x(u_i)$ be defined over an alphabet $\Sigma$ and $\VG_\x(u_{i'})$ over an alphabet $\Sigma'$. If $\Sigma$ and $\Sigma'$ overlap, then $\VG_\x(u_i)$ and $\VG_\x(u_{i'})$ cannot both be aligned in an optimal alignment without interference with each other. On the other hand, if $\Sigma$ and $\Sigma'$ are disjoint, then each vector $v_j$ should have its corresponding vector gadget $VG_\y(v_j)$ defined over both $\Sigma$ and $\Sigma'$ to enable to align $\VG_\x(u_i)$ with $\VG_\y(v_j)$ as well as $\VG_\x(u_{i'})$ with $\VG_\y(v_j)$. The latter option drastically increases the size of vector gadgets. Thus, we must define all vector gadgets over a common alphabet $\Sigma$ and make sure that \emph{only a single pair} $\VG_\x(u_i),\VG_\y(v_j)$ is aligned in an optimal alignment (in contrast with $n$ pairs aligned in the previous reductions for LCS and Edit Distance).

\subsubsection{Technical Contributions and Proof Outline}

Fortunately, a surprisingly simple approach works:  As a key tool, we provide \emph{separator sequences} $\alpha_0\dots\alpha_{n-1}$ and $\beta_0\dots\beta_{n-1}$ with the following properties: (1) for every $i,j \in \{0,\dots,n-1\}$ the LCIS of $\alpha_0 \dots \alpha_i$ and $\beta_0 \dots \beta_j$ has a length of $f(i+j)$, where $f$ is a linear function, and (2) $\sum_i |\alpha_i|$ and $\sum_j |\beta_j|$ are bounded by $n^{1 + o(1)}$. Note that existence of such a gadget is somewhat unintuitive: condition (1) for $i=0$ and $j=n-1$ requires $|\alpha_0| = \Omega(n)$, yet still the total length $\sum_i |\alpha_i|$ must not exceed the length of $|\alpha_0|$ significantly. Indeed, we achieve this by a careful inductive construction that generates such sequences with heavily varying block sizes $|\alpha_i|$ and $|\beta_j|$.

We apply these separator sequences as follows. We first define simple vector gadgets $\VG_\x(u_i),\VG_\y(v_j)$ over an alphabet $\Sigma$ such that the length of an LCIS of $\VG_\x(u_i)$ and $\VG_\y(v_j)$ is $d-(u_i \dotp v_j)$. Then we construct the separator sequences as above over an alphabet $\Sigma_<$ whose elements are strictly smaller than all elements in $\Sigma$.  Furthermore, we create analogous separator sequences $\alpha'_0\dots\alpha'_{n-1}$ and $\beta_0'\dots\beta'_{n-1}$ which satisfy a property like (1) for all suffixes instead of prefixes, using an alphabet $\Sigma_>$ whose elements are strictly larger than all elements in $\Sigma$. Now, we define
\begin{align*}
X & = \alpha_0 \VG_\x(u_0) \alpha'_0  \dots \alpha_{n-1} \VG_\x(u_{n-1}) \alpha'_{n-1}, \\
Y & = \beta_0 \VG_\y(v_0) \beta'_0  \dots \beta_{n-1} \VG_\y(v_{n-1}) \beta'_{n-1}.
\end{align*}
As we will show in Section~\ref{sec:lcis}, the length of an LCIS of $X$ and $Y$ is $C - \min_{i,j} (u_i \dotp v_j)$ for some constant $C$ depending only on $n$ and $d$.

In contrast to previous such OV-based lower bounds, we use heavily varying separators (paddings) between vector gadgets.

\onlySHORT{
\subsection{Paper organization}

After setting up conventions and introducing our hardness assumptions in Section~\ref{sec:prelim}, we give the main construction, i.e., the hardness of LCIS in Section~\ref{sec:lcis}. The proofs of Theorems~\ref{thm:klcis}, \ref{thm:lcis-nL}, \ref{thm:lcwis} and \ref{thm:lcis-bpseth} can be found in the full version in the appendix. We conclude with some open problems in Section~\ref{sec:conclusion}.
}

\section{Preliminaries}
\label{sec:prelim}

As a convention, we use capital or Greek letters to denote sequences over integers. Let $X,Y$ be integer sequences. We write $|X|$ for the length of $X$, $X[k]$ for the $k$-th element in the sequence $X$ ($k\in\{0,\ldots,|X|-1\}$), and $X\conc Y = XY$ for the concatenation of $X$ and $Y$. We say that $Y$ is a subsequence of $X$ if there exist indices $0\le i_1 < i_2 < \cdots < i_{|Y|}\le |X| - 1$  such that $X[i_k] = Y[k]$ for all $k\in \{0,\dots,|Y|-1\}$. Given any number of sequences $X_1,\dots,X_k$, we say that $Y$ is a common subsequence of $X_1,\dots,X_k$ if $Y$ is a subsequence of each $X_i, i\in \{1,\dots,k\}$. $X$ is called strictly increasing (or weakly increasing) if $X[0] < X[1] < \cdots < X[|X|-1]$ (or $X[0] \le X[1] \le \cdots \le X[|X|-1]$). For any $k$ sequences $X_1, \ldots, X_k$, we denote by $\lcis(X_1, \ldots, X_k)$ the length of their longest common subsequence that is strictly increasing.

\onlySHORT{\textbf{Hardness Assumptions.}}
\onlyFULL{\subsection{Hardness Assumptions}}
All of our lower bounds hold assuming the Strong Exponential Time Hypothesis (SETH), introduced by Impagliazzo and Paturi~\cite{ImpagliazzoP01,ImpagliazzoPZ01}. It essentially states that no exponential speed-up over exhaustive search is possible for the CNF satisfiability problem.

\begin{hypothesis}[Strong Exponential Time Hypothesis (SETH)]
There is no $\varepsilon > 0$ such that for all $d \ge 3$ there is an $\Oh{2^{(1-\varepsilon)n}}$ time algorithm for $d$-SAT.
\end{hypothesis}

This hypothesis implies tight hardness of the $k$-Orthogonal Vectors problem ($k$-OV), which will be the starting point of our reductions: Given $k$ sets $\mathcal{U}_1, \dots, \mathcal{U}_k \subseteq \{0,1\}^d$, each with $|\mathcal{U}_i| = n$ vectors over $d= n^{o(1)}$ dimensions, determine whether there is a $k$-tuple $(u_1, \dots, u_k) \in \mathcal{U}_1 \times \cdots \times \mathcal{U}_k$ such that $\sum_{\ell=0}^{d-1} \prod_{i=1}^k u_i[\ell] = 0$. By exhaustive enumeration, it can be solved in time $\Oh{n^k d} = n^{k+o(1)}$. The following conjecture is implied by SETH by the well-known split-and-list technique of Williams~\cite{Williams05} (and the sparsification lemma~\cite{ImpagliazzoPZ01}).

\begin{hypothesis}[$k$-OV conjecture]
\label{hyp:kov}
Let $k\ge 2$. There is no $\Oh{n^{k-\varepsilon}}$ time algorithm for $k$-OV, with $d= \omega(\log n)$, for any constant $\varepsilon > 0$.
\end{hypothesis}

For the special case of $k=2$, which we simply denote by OV, we obtain the following weaker conjecture.
\begin{hypothesis}[OV conjecture]
\label{hyp:ov}
There is no $\Oh{n^{2-\varepsilon}}$ time algorithm for OV, with $d=\omega(\log n)$, for any constant $\varepsilon > 0$. Equivalently, even restricted to instances with $|\mathcal{U}_1| = n$ and $|\mathcal{U}_2| = n^{\gamma}$, $0 < \gamma \le 1$, there is no $\Oh{n^{1+\gamma- \varepsilon}}$ time algorithm for OV, with $d= \omega(\log n)$, for any constant $\varepsilon > 0$.
\end{hypothesis}

A proof of the folklore equivalence of the statements for equal and unequal set sizes can be found, e.g., in~\cite{BringmannK15}.

\section{Main Construction: Hardness of LCIS}
\label{sec:lcis}

In this section, we prove quadratic-time SETH hardness of LCIS, i.e., prove Theorem~\ref{thm:lcis}. We first introduce an \emph{inflation} operation, which we then use to construct our separator sequences. After defining simple vector gadgets, we show how to embed an Orthogonal Vectors instance using our vector gadgets and separator sequences. 

\subsection{Inflation}

We begin by introducing the inflation operation, which roughly corresponds to weighing the sequences.

\begin{definition}
For a sequence $A = \seq{a_0, a_1, \ldots, a_{n-1}}$ of integers we define:
\[\infl(A) = \seq{2a_0-1, 2a_0, 2a_1-1, 2a_1, \ldots, 2a_{n-1}-1, 2a_{n-1}}.\]
\end{definition}

\begin{lemma}
\label{le:infl}
 For any two sequences $A$ and $B$, $\lcis(\infl(A), \infl(B)) = 2 \cdot \lcis(A, B)$.

 \begin{proof}
  Let $C$ be the longest common increasing subsequence of
  $A$ and $B$. Observe that $\infl(C)$ is a common increasing subsequence of
  $\infl(A)$ and $\infl(B)$ of length $2 \cdot |C|$, thus
  $\lcis(\infl(A), \infl(B)) \geq 2 \cdot \lcis(A, B)$. 

  Conversely, let $\bar A$ denote $\infl(A)$ and $\bar B$ denote $\infl(B)$.
  Let $\bar C$ be the longest common increasing subsequence of $\bar A$ and $\bar B$.
  If we divide all elements of $\bar C$ by $2$ and round up to the closest integer,
  we end up with a weakly increasing sequence. Now, if we remove duplicate elements
  to make this sequence strictly increasing, we obtain $C$, a common increasing
  subsequence of $A$ and $B$.
  At most $2$ distinct elements may become equal after division by $2$ and rounding,
  therefore $C$ contains at least $\ceil{\lcis(\bar A, \bar B) / 2}$ elements,
  so $2 \cdot \lcis(A, B) \geq \lcis(\bar A, \bar B)$.
  This completes the proof.
 \end{proof}

\end{lemma}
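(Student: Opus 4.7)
The plan is to prove the equality by establishing both inequalities separately, exploiting the symmetric roles of $A$ and $B$ and the very rigid structure of $\infl$ (each original element $a$ produces exactly the block $2a-1, 2a$, and these blocks are arranged in the original order).

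First I would show the direction $\lcis(\infl(A), \infl(B)) \ge 2 \cdot \lcis(A,B)$ by an explicit lifting. Take a longest common increasing subsequence $C = \seq{c_0, \ldots, c_{k-1}}$ of $A$ and $B$. I would argue that $\infl(C) = \seq{2c_0-1, 2c_0, \ldots, 2c_{k-1}-1, 2c_{k-1}}$ is strictly increasing (using $c_i < c_{i+1}$, hence $2c_i < 2c_{i+1}-1$) and is a subsequence of both $\infl(A)$ and $\infl(B)$, because the block $\seq{2c_i-1, 2c_i}$ appears literally in $\infl(X)$ whenever $c_i$ appears in $X$, and the order of the chosen $c_i$'s is preserved. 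Thus $\infl(C)$ witnesses a common increasing subsequence of length $2k$.

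For the reverse direction $\lcis(\infl(A), \infl(B)) \le 2 \cdot \lcis(A,B)$, I would start from a longest common increasing subsequence $D$ of $\infl(A)$ and $\infl(B)$ and \emph{contract} it back. Every entry of $D$ has the form $2a-1$ or $2a$ for some $a$ appearing in both $A$ and $B$, so mapping $d \mapsto \lceil d / 2 \rceil$ pointwise yields a sequence $D'$ of integers that is weakly increasing (strict inequality $d_i < d_{i+1}$ gives $\lceil d_i/2 \rceil \le \lceil d_{i+1}/2\rceil$), and $D'$ is a common \emph{weakly} increasing subsequence of $A$ and $B$ (this requires checking that the positional order is preserved when we pass from the inflated sequences back to the original ones, which follows from the block structure of $\infl$). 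Deduplicating consecutive repeats in $D'$ gives a common \emph{strictly} increasing subsequence $C$ of $A$ and $B$. The key quantitative step is to observe that at most two entries of $D$ can collapse to the same value, because if three entries all mapped to the same $a$ they would form a strictly increasing triple inside $\{2a-1, 2a\}$, which is impossible. Hence $|C| \ge \lceil |D|/2 \rceil$, giving $2 \lcis(A,B) \ge |D| = \lcis(\infl(A), \infl(B))$.

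I expect the main subtlety (rather than a true obstacle) to be the correctness of the contraction map when verifying it yields a genuine subsequence of $A$ (and of $B$), since the positions in $\infl(A)$ used by $D$ need to be translated back to valid, increasing positions in $A$; this is clean because each index $i$ of $A$ owns exactly the two consecutive positions $2i, 2i+1$ in $\infl(A)$. The rest is the factor-of-two counting argument above, which is the actual content of the lemma.
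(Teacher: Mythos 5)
Your proposal is correct and follows essentially the same route as the paper's proof: lifting an LCIS of $A,B$ via $\infl$ for the lower bound, and contracting an LCIS of the inflated sequences via $d \mapsto \lceil d/2 \rceil$ followed by deduplication, with the observation that at most two elements can collapse to the same value, for the upper bound. The extra details you supply (strict increase of the lifted sequence via $2c_i < 2c_{i+1}-1$, and the position bookkeeping for the contraction) are exactly the steps the paper leaves implicit.
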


\subsection{Separator sequences}
\label{sec:sepseq}

\begin{figure}[!b]
\begin{center}
\includegraphics[width=0.67\textwidth]{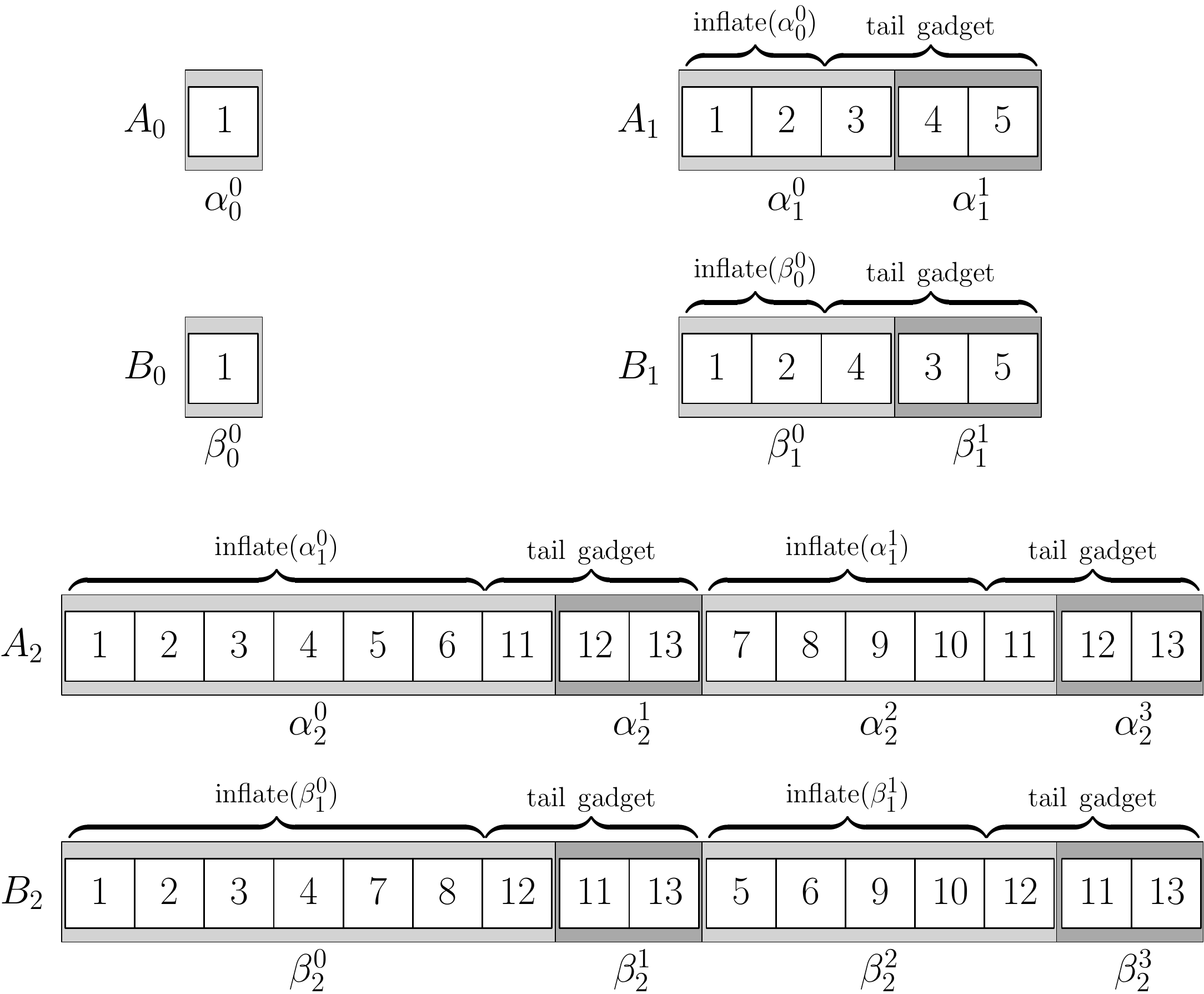}
\end{center}
\caption{
  Initial steps of the inductive construction of the separator sequences.
}
\end{figure}

Our goal is to construct two sequences $A$ and $B$ which can be split into $n$
blocks, i.e. $A=\alpha_0\alpha_1\ldots\alpha_{n-1}$ and
$B=\beta_0\beta_1\ldots\beta_{n-1}$, such that the length of the longest common
increasing subsequence of the first $i$ blocks of $A$ and the first $j$ blocks
of $B$ equals $i + j$, up to an additive constant.
We call $A$ and $B$ \emph{separator sequences}, and use them later to separate
vector gadgets in order to make sure that only one pair of gadgets may interact
with each other at the same time.

We construct the separator sequences inductively. For every $k \in \Nat$, the
sequences $A_k$ and $B_k$ are concatenations of $2^k$ blocks (of varying sizes),
$A_k = \alpha_k^0\alpha_k^1\ldots\alpha_k^{2^k-1}$ and
$B_k = \beta_k^0\beta_k^1 \ldots\beta_k^{2^k-1}$.
Let $s_k$ denote the largest element of both sequences. As we will soon observe,
$s_k = 2^{k+2} - 3$.

The construction works as follows: for $k = 0$, we can simply set $A_0$ and $B_0$
as one-element sequences~$\seq{1}$. We then construct $A_{k+1}$ and $B_{k+1}$
inductively from $A_k$ and $B_k$ in two steps. First, we inflate both $A_k$ and
$B_k$, then after each (now inflated) block we insert $3$-element sequences,
called \emph{tail gadgets}, $\seq{2s_k+2, 2s_k+1, 2s_k+3}$ for $A_{k+1}$ and
$\seq{2s_k+1, 2s_k+2, 2s_k+3}$ for $B_{k+1}$.
Formally, we describe the construction by defining blocks of the new sequences.
For $i\in\{0,1,\ldots,2^k-1\}$,
\begin{align*}
\alpha_{k+1}^{2i} &= \infl(\alpha_k^i) \conc \seq{2s_k + 2}, &
\alpha_{k+1}^{2i+1} &= \seq{2s_k+1, 2s_k+3}, \\
\beta_{k+1}^{2i} &= \infl(\beta_k^i) \conc \seq{2s_k + 1}, &
\beta_{k+1}^{2i+1} &= \seq{2s_k+2, 2s_k+3}.
\end{align*}
Note that the symbols appearing in tail gadgets do not appear in the inflated sequences.
The largest element of both new sequences $s_{k+1}$ equals $2 s_k + 3$,
and solving the recurrence gives indeed $s_k = 2^{k+2} - 3$.

\begin{figure}
\begin{center}
\includegraphics[width=0.5\textwidth]{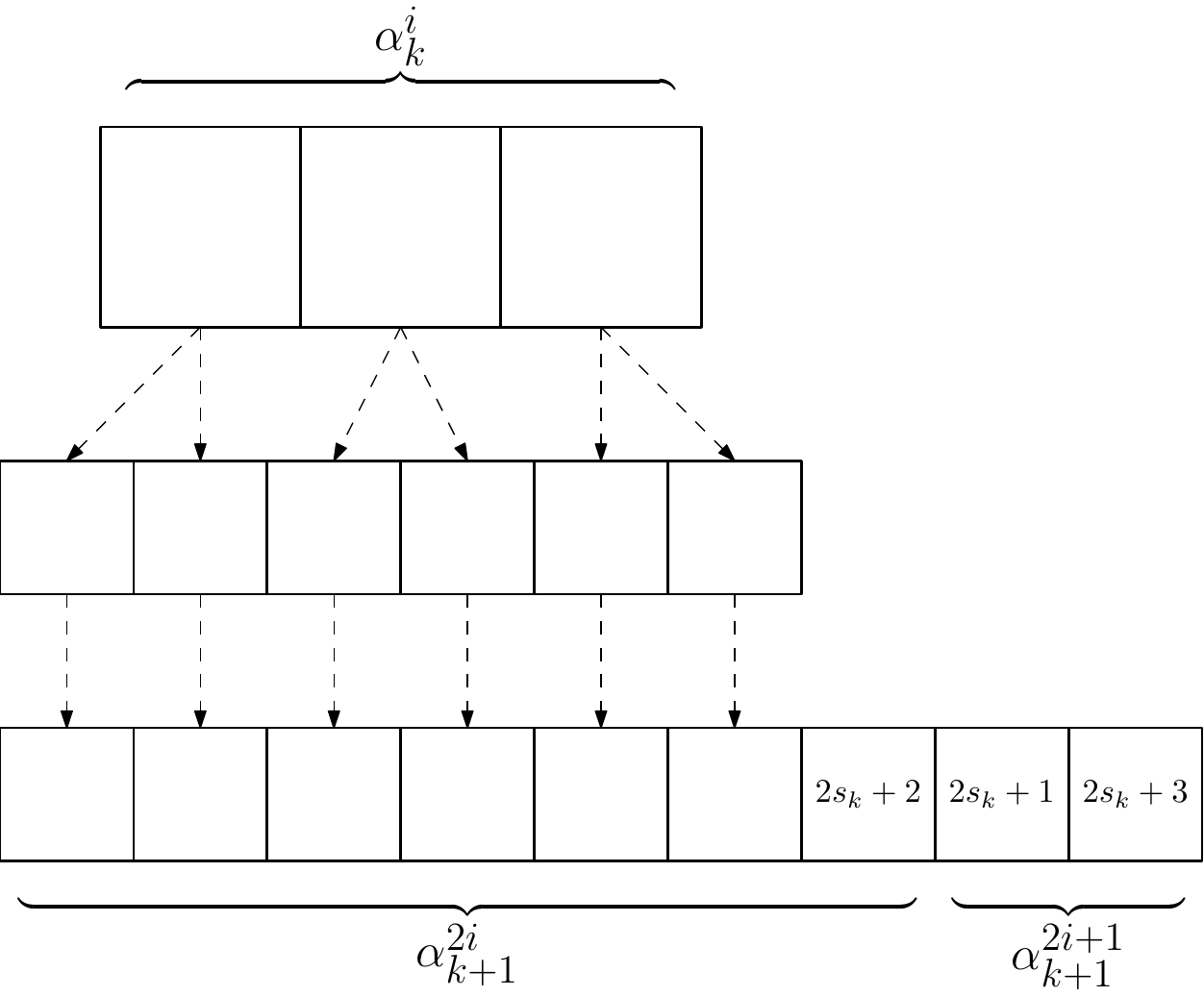}
\hfill
\includegraphics[width=0.45\textwidth]{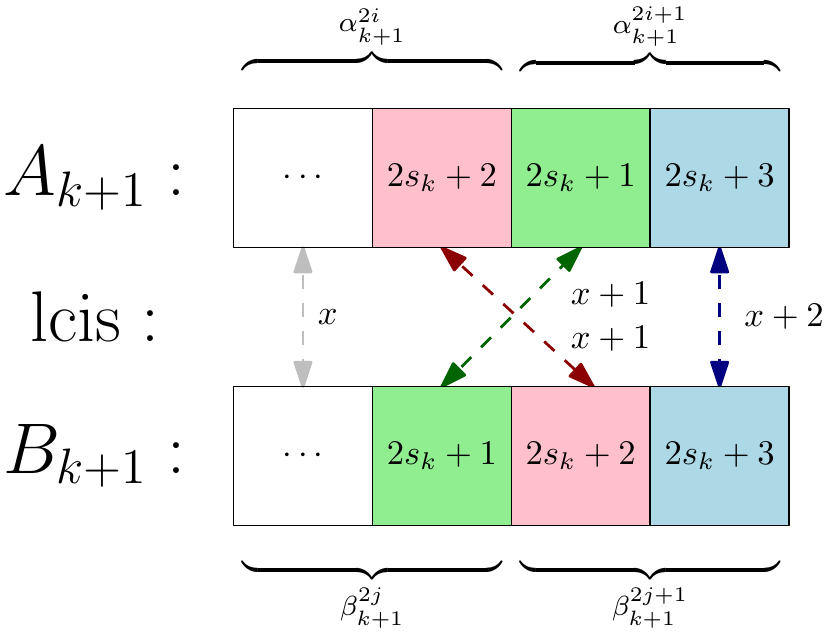}
\end{center}
\caption{
  Left: constructing $A_{k+1}$ from $A_k$.
  Right: intuition behind tail gadgets, $x=2i+2j+2^{k+1}$.
}
\end{figure}

Now, let us prove two useful properties of the separator sequences.

\begin{lemma}
  \label{le:sep-len}
  $|A_k| = |B_k| = \brac{\frac{3}{2}k+1} \cdot 2^k$ = $\Oh{k2^k}$.
  \begin{proof}
  Observe that $|A_{k+1}| = 2|A_k| + 3 \cdot 2^k$. Indeed, to obtain $A_{k+1}$
  first we double the size of $A_k$ and then add $3$ new elements for each of
  the $2^k$ blocks of $A_k$. Solving the recurrence completes the proof.
  The same reasoning applies to $B_k$.
  \end{proof}
\end{lemma}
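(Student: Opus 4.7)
The plan is to prove this by setting up a recurrence on $|A_k|$ (and symmetrically $|B_k|$) that directly mirrors the inductive construction, and then solving it in closed form. First, I would inspect how $A_{k+1}$ is built from $A_k$: the inflate operation replaces each element by two elements, so $\infl(A_k)$ has length exactly $2|A_k|$; then for each of the $2^k$ blocks $\alpha_k^i$ we append a singleton tail element $\seq{2s_k+2}$ and an auxiliary two-element block $\seq{2s_k+1, 2s_k+3}$, contributing $3$ extra symbols per block, hence $3\cdot 2^k$ symbols in total. Summing these contributions gives the recurrence
\[ |A_{k+1}| = 2|A_k| + 3\cdot 2^k, \qquad |A_0| = 1. \]

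Next, I would solve this recurrence. Dividing by $2^{k+1}$ yields
\[ \frac{|A_{k+1}|}{2^{k+1}} = \frac{|A_k|}{2^k} + \frac{3}{2}, \]
so the sequence $|A_k|/2^k$ is arithmetic with common difference $\tfrac{3}{2}$ and initial term $1$. Telescoping gives $|A_k|/2^k = 1 + \tfrac{3}{2}k$, and multiplying back produces $|A_k| = \brac{\tfrac{3}{2}k+1}\cdot 2^k$, which is $\Oh{k 2^k}$ as claimed. The statement for $|B_k|$ follows by a completely symmetric argument, since $B_{k+1}$ is obtained from $B_k$ by the same inflate-plus-tail-gadget pattern contributing the same number of symbols per block.

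There is essentially no real obstacle here: the only thing to be careful about is correctly accounting for the three new symbols per block (one appended to the inflated block and two forming the auxiliary block), which is merely a matter of reading off the definitions of $\alpha_{k+1}^{2i}$ and $\alpha_{k+1}^{2i+1}$. Once the recurrence is correctly derived, the closed form is immediate from the standard trick of normalizing by $2^k$.
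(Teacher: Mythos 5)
Your proof is correct and follows essentially the same route as the paper: derive the recurrence $|A_{k+1}| = 2|A_k| + 3\cdot 2^k$ from the inflate-plus-tail-gadget construction and solve it. The paper leaves the solving step implicit, whereas you spell out the normalization by $2^k$, but the argument is identical in substance.
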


\begin{lemma}
  \label{le:sep-lcis}
  For every $i, j \in \set{0, 1, \ldots, 2^k-1}$,  $\lcis(\alpha_k^0\ldots\alpha_k^i, \beta_k^0\ldots\beta_k^j) = i + j + 2^k$.

  \begin{proof}
  The proof is by induction on $k$.
  Assume the statement is true for $k$ and let us prove it for $k+1$.

  \paragraph{The ``$\geq$'' direction.}
  First, consider the case when both $i$ and $j$ are even. Observe that
  $\infl(\alpha_k^0\ldots\alpha_k^{i/2})$ and
  $\infl(\beta_k^0\ldots\beta_k^{j/2})$ are subsequences of
  $\alpha_{k+1}^0\ldots\alpha_{k+1}^i$ and $\beta_{k+1}^0\ldots\beta_{k+1}^j$,
  respectively. Thus, using the induction hypothesis and inflation properties,
  \begin{multline*}
  \lcis(\alpha_{k+1}^0\ldots\alpha_{k+1}^i$, $\beta_{k+1}^0\ldots\beta_{k+1}^j)
   \geq \lcis(\infl(\alpha_k^0\ldots\alpha_k^{i/2}),
        \infl(\beta_k^0\ldots\beta_k^{j/2})) = \\
   = 2 \cdot \lcis(\alpha_k^0\ldots\alpha_k^{i/2}, \beta_k^0\ldots\beta_k^{j/2})
   = 2 \cdot (i/2 + j/2 + 2^k) = i + j + 2^{k+1}.
  \end{multline*}
  If $i$ is odd and $j$ is even, refer to the previous case to get
  a common increasing subsequence of $\alpha_{k+1}^0\ldots\alpha_{k+1}^{i-1}$
  and $\beta_{k+1}^0\ldots\beta_{k+1}^j$ of length $i - 1 + j + 2^{k+1}$
  consisting only of elements less than or equal to $2s_k$, and append the element
  $2s_k+1$ to the end of it.
  Analogously, for $i$ even and $j$ odd, take such an LCIS of
  $\alpha_{k+1}^0\ldots\alpha_{k+1}^i$ and
  $\beta_{k+1}^0\ldots\beta_{k+1}^{j-1}$, and append $2s_k+2$.
  Finally, for both $i$ and $j$ odd, take an LCIS of
  $\alpha_{k+1}^0\ldots\alpha_{k+1}^{i-1}$ and
  $\beta_{k+1}^0\ldots\beta_{k+1}^{j-1}$, and append $2s_k+1$ and $2s_k+3$.

  \paragraph{The ``$\leq$'' direction.}
  We proceed by induction on $i + j$.
  Fix $i$ and $j$, and let $L$ be a longest common increasing subsequence of
  $\alpha_{k+1}^0\ldots\alpha_{k+1}^i$ and $\beta_{k+1}^0\ldots\beta_{k+1}^j$.

  If the last element of $L$ is less than or equal to $2s_k$, $L$ is in fact
  a common increasing subsequence of
  $\infl(\alpha_k^0\ldots\alpha_k^{\floor{i/2}})$ and
  $\infl(\beta_k^0\ldots\beta_k^{\floor{j/2}})$, thus, by the induction
  hypothesis and inflation properties, $|L| \leq 2 \cdot (\floor{i/2} + \floor{j/2} + 2^k) \leq i + j + 2^{k+1}$.

  The remaining case is when the last element of $L$ is greater than $2s_k$. In this case, consider the second-to-last element of $L$. It must belong to some blocks
  $\alpha_{k+1}^{i'}$ and $\beta_{k+1}^{j'}$ for $i' \leq i$ and $j' \leq j$, and we claim that $i=i'$ and $j=j'$ cannot hold simultaneously:
  by construction of separator sequences, if blocks $\alpha_{k+1}^i$ and $\beta_{k+1}^j$ have a common
  element larger than $2s_k$, then it is the only common element of these two
  blocks. Therefore, it cannot be the case that both $i=i'$ and $j=j'$, because
  the last two elements of $L$ would then be located in $\alpha_{k+1}^i$ and
  $\beta_{k+1}^j$. As a consequence, $i'+j' < i+j$, which lets us apply the
  induction hypothesis to reason that the prefix of $L$ omitting its last
  element is of length at most $i' + j' + 2^{k+1}$. Therefore, $|L| \leq 1 + i' + j' + 2^{k+1} \leq i + j + 2^{k+1}$, which completes the proof.
  \end{proof}
\end{lemma}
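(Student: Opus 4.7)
The plan is to proceed by induction on $k$. The base case $k=0$ is immediate: the only valid choice is $i = j = 0$, and the LCIS of $\langle 1\rangle$ with itself has length $1 = 0+0+2^0$. For the inductive step from $k$ to $k+1$, I will establish the two inequalities separately.

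For the ``$\geq$'' direction I will exhibit an explicit common strictly increasing subsequence, branching on the parities of $i$ and $j$. The easiest case is both $i$ and $j$ even: writing $i = 2i'$, $j = 2j'$, the sequences $\infl(\alpha_k^0 \cdots \alpha_k^{i'})$ and $\infl(\beta_k^0 \cdots \beta_k^{j'})$ are subsequences of the corresponding prefixes of $A_{k+1}$ and $B_{k+1}$, so Lemma~\ref{le:infl} and the outer inductive hypothesis immediately supply a common increasing subsequence of length $2(i'+j'+2^k) = i+j+2^{k+1}$. The remaining three parity cases I will handle by starting from the best even-even sub-prefix of one block less on the odd side(s) and appending one or two tail-gadget symbols from $\{2s_k+1, 2s_k+2, 2s_k+3\}$ that are present on both sides precisely because of the extra odd block(s); strict monotonicity is preserved because these symbols all exceed $2s_k$, the maximum of the inflated portion.

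For the ``$\leq$'' direction I plan a nested induction on $i+j$. Let $L$ be an optimal LCIS of the two prefixes and inspect its last element. If it is at most $2s_k$, then $L$ uses no tail-gadget symbols and therefore lies entirely inside $\infl(\alpha_k^0 \cdots \alpha_k^{\lfloor i/2\rfloor})$ and $\infl(\beta_k^0 \cdots \beta_k^{\lfloor j/2\rfloor})$; Lemma~\ref{le:infl} together with the outer induction yields $|L| \leq 2(\lfloor i/2\rfloor + \lfloor j/2\rfloor + 2^k) \leq i+j+2^{k+1}$. Otherwise $L$ ends in a tail-gadget symbol located in some blocks $\alpha_{k+1}^{i^*}$ and $\beta_{k+1}^{j^*}$; the key structural claim, verified by inspecting the four parity combinations of $(p,q)$, is that any fixed block pair $(\alpha_{k+1}^p, \beta_{k+1}^q)$ shares at most one tail-gadget symbol. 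This forces the second-to-last element of $L$ to sit in a block pair whose index sum is strictly less than $i+j$, so the nested induction gives $|L| \leq 1 + (i^* + j^* + 2^{k+1}) \leq i+j+2^{k+1}$.

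I expect the main obstacle to be executing the ``$\leq$'' direction cleanly. The delicate point is turning the one-common-tail-symbol property into the assertion that the second-to-last element of $L$ must sit in strictly earlier blocks, and thus that the index sum strictly decreases, so that the nested induction applies. Once this bookkeeping is done, the dichotomy ``last element $\leq 2s_k$'' versus ``last element $> 2s_k$'' channels the argument into either Lemma~\ref{le:infl} with the outer induction or the nested induction on $i+j$, and both cases close at the bound $i+j+2^{k+1}$.
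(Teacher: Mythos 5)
Your proposal is correct and follows essentially the same route as the paper's proof: induction on $k$, the ``$\geq$'' direction via parity case analysis with tail-gadget symbols appended, and the ``$\leq$'' direction via a nested induction on $i+j$ split on whether the last element exceeds $2s_k$, using the fact that a block pair shares at most one symbol above $2s_k$. The only nit is a small indexing conflation in the last step (the nested induction should be applied to the blocks containing the \emph{second-to-last} element of $L$, whose index sum is strictly below $i+j$, rather than to the blocks of the last element), which is exactly the bookkeeping the paper spells out.
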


Observe that if we reverse the sequences $A_k$ and $B_k$ along with changing all elements to their negations, i.e.~$x$ to $-x$, we obtain sequences $\hat A_k$ and $\hat B_k$ such that
$\hat A_k$ splits into $2^k$ blocks
$\hat \alpha_k^0 \ldots \hat \alpha_k^{2^k-1}$,
$\hat B_k$ splits into $2^k$ blocks
$\hat \beta_k^0 \ldots \hat \beta_k^{2^k-1}$,
and
\begin{equation}
\label{eq:hat}
\lcis(\hat\alpha_k^i\ldots\hat\alpha_k^{2^k-1},
      \hat\beta_k^j\ldots\hat\beta_k^{2^k-1})
      = 2 \cdot (2^k - 1) - i - j + 2^k.
\end{equation}

Finally, observe that we can add any constant to all elements of the sequences
$A_k$ and $B_k$ (as well as $\hat A_k$ and $\hat B_k$) without changing the
property stated in Lemma~\ref{le:sep-lcis} (and its analogue for $\hat A_k$ and
$\hat B_k$, i.e.~Equation~\eqref{eq:hat}).

\subsection{Vector gadgets}
\label{sec:vg}
Let $\mathcal{U} = \{u_0, \ldots, u_{n-1}\}$ and $\mathcal{V} = \{v_0, \ldots, v_{n-1}\}$
be two sets of $d$-dimensional $(0,1)$-vectors.

For $i\in\{0,1,\ldots,n-1\}$ let us construct the vector gadgets $U_i$ and $V_i$ as $2d$-element sequences, by defining, for every $p \in \{0, 1, \ldots, d-1\}$,
\begin{align*}
(U_i[2p-1], U_i[2p]) &= \begin{cases}
                           (2p - 1, 2p)\ & \mbox{if}\ u_i[p] = 0, \\
                           (2p - 1, 2p -1)\ & \mbox{if}\ u_i[p] = 1,
                        \end{cases}\\
(V_i[2p-1], V_i[2p]) &= \begin{cases}
                 \mathrlap{(2p, 2p-1)}\hphantom{
                           (2p - 1, 2p -1)}\ & \mbox{if}\ v_i[p] = 0, \\
                           (2p, 2p)\ & \mbox{if}\ v_i[p] = 1.
                        \end{cases}
\end{align*}

Observe that at most one of the elements $2p-1$ and $2p$ may appear in the LCIS
of $U_i$ and $V_j$, and it happens if and only if $u_i[p]$ and $v_j[p]$ are not
both equal to one. Therefore, $\lcis(U_i, V_j) = d - (u_i \dotp v_j)$, and, in
particular, $\lcis(U_i, V_j) = d$ if and only if $u_i$ and $v_j$ are orthogonal.

\subsection{Final construction}
\label{sec:2-lcis-final}

To put all the pieces together, we plug vector gadgets $U_i$ and $V_j$ into the separator sequences from Section \ref{sec:sepseq}, obtaining two sequences whose LCIS depends on the minimal inner product of vectors $u_i$ and $v_j$. We provide a general construction of such sequences,
\onlySHORT{which will be useful for proving further results in the full version of the paper.}%
\onlyFULL{which will be useful in later sections.}

\begin{lemma}
\label{lem:2-lcis-main}
Let $X_0, X_1, \ldots, X_{n-1}$, $Y_0, Y_1, \ldots, Y_{n-1}$ be integer sequences such that none of them has an increasing subsequence longer than $\delta$. Then there exist sequences $X$ and $Y$ of length $\Oh{\delta \cdot n \log n} + \sum |X_i| + \sum |Y_j|$, constructible in linear time, such that:
\[\lcis(X,Y) = \max_{i,j} \lcis(X_i,Y_j) + C \]
for a constant $C$ that only depends on $n$ and $\delta$ and is $\Oh{n\delta}$.
 
\begin{proof}
We can assume that $n = 2^k$ for some positive integer $k$, adding some dummy sequences if necessary.
Recall the sequences $A_k$, $B_k$, $\hat A_k$ and $\hat B_k$ constructed in Section \ref{sec:sepseq}. Let $A, B, \hat A, \hat B$ be the sequences obtained from $A_k, B_k, \hat A_k, \hat B_k$ by applying inflation
$\ceil{\log_2\delta}$ times (thus increasing their length by a factor of
$\ell = 2^{\ceil{\log_2\delta}} \geq \delta$). Each of these four sequences splits
into (now inflated) blocks, e.g.~$A = \alpha_0 \alpha_1 \ldots \alpha_{n-1}$,
where $\alpha_i = \infl^{\ceil{\log_2\delta}}(\alpha_k^i)$.

We subtract from $A$ and $B$ a constant large enough for all their elements to be smaller than all elements of every $X_i$ and $Y_j$. Similarly, we add to $A'$ and $B'$ a constant large enough for all their elements to be larger than all elements of every $X_i$ and $Y_j$. Now, we can construct the sequences $X$ and $Y$ as follows:
\begin{align*}
X &= \alpha_0 X_0 \hat\alpha_0 \alpha_1 X_1 \hat\alpha_1 \ldots \alpha_{n-1} X_{n-1} \hat\alpha_{n-1},\\
Y &= \beta_0 Y_0 \hat\beta_0 \beta_1 Y_1 \hat\beta_1 \ldots \beta_{n-1} Y_{n-1} \hat\beta_{n-1}.
\end{align*}
We claim that
  \[\lcis(X,Y) = \ell \cdot (4 n - 2) + M\mbox{, where }M = \max_{i,j} {\lcis(X_i,Y_j)}. \]
 
  Let $X_i$ and $Y_j$ be the pair of sequences achieving $\lcis(X_i, Y_j) = M$. Recall that
  $\lcis(\alpha_0\ldots\alpha_i, \beta_0\ldots\beta_j) = \ell \cdot (i + j + n)$,
  with all the elements of this common subsequence preceding the elements of
  $X_i$ and $Y_j$ in $X$ and $Y$, respectively, and being smaller than them.
  In the same way $\lcis(\hat\alpha_i\ldots\hat\alpha_{n-1}, \hat\beta_j\ldots\hat\beta_{n-1})
  = \ell \cdot (2 \cdot (n - 1) - (i + j) + n)$ with all the elements
  of LCIS being greater and appearing later than those of $X_i$ and $Y_j$. By
  concatenating these three sequences we obtain a common increasing subsequence
  of $X$ and $Y$ of length $\ell \cdot (4 n - 2) + M$. 

\onlySHORT{
We defer the simple remainder of the proof, i.e., proving $\lcis(X,Y) \leq \ell \cdot (4 n - 2) + M$ to the full version of the paper.}
\onlyFULL{
  It remains to prove $\lcis(X,Y) \leq \ell \cdot (4 n - 2) + M$.
  Let $L$ be any common increasing subsequence of $X$ and $Y$. Observe that $L$
  must split into three (some of them possibly empty) parts $L = S G \hat S$
  with $S$ consisting only of elements of $A$ and $B$, $G$ -- only elements of
  $X_i$ and $Y_j$, and $\hat S$ -- elements of $\hat A$ and $\hat B$.

  Let $x$ be the last element of $S$ and $\hat x$ the first element of $\hat S$. We know
  that $x$ belongs to some blocks $\alpha_i$ of $A$ and $\beta_j$ of $B$, and
  $\hat x$ belongs to some blocks $\hat\alpha_{\hat i}$ of $\hat A$ and $\hat\beta_{\hat j}$ of $\hat B$.
  Obviously $i \leq \hat i$ and $j \leq \hat j$. By Lemma~\ref{le:sep-lcis} and
  inflation properties we have $|S| \leq \ell \cdot (i + j + n)$ and
  $|\hat S| \leq \ell \cdot (2 \cdot (n - 1) - (\hat i + \hat j) + n)$. We consider two cases:

  \textbf{Case 1.}
 If $i = \hat i$ and $j = \hat j$, then $G$ may only contain
    elements of $X_i$ and $Y_j$. Therefore
    \[ |L| \leq |S| + \lcis(X_i, Y_j) + |\hat S| \leq \ell \cdot (4n-2) + M.\]

  \textbf{Case 2.}
 If $i < \hat i$ or $j < \hat j$, then $G$ must be a strictly increasing subsequence of both $X_i \conc \cdots \conc X_{\hat i}$ and $Y_j \conc \cdots \conc Y_{\hat j}$ therefore its length can be bounded by \begin{multline*} |G|\leq\min(\delta\cdot(\hat i-i+1),\delta\cdot(\hat j-j+1)) \leq \ell\cdot(\min(\hat i-i,\hat j-j)+1) \leq \\ \leq \ell\cdot(\min(\hat i-i,\hat j-j)+\max(\hat i-i,\hat j-j)) = \ell \cdot(\hat i - i + \hat j - j).\end{multline*} On the other hand, $|S|+|\hat S|\leq \ell \cdot (4n-2-(\hat i-i)-(\hat j-j))$. From that we obtain $|L| \leq \ell \cdot (4n-2)$, as desired.
 
}
 \end{proof}

\end{lemma}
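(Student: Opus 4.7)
Given the construction already laid out in the setup---padding $n$ to a power of two, $\ceil{\log_2 \delta}$-fold inflating the four separator sequences of Section~\ref{sec:sepseq} so that each of their blocks is scaled up by $\ell = 2^{\ceil{\log_2 \delta}} \geq \delta$, shifting alphabets so that $A, B$ lie strictly below and $\hat A, \hat B$ strictly above the alphabet of every $X_i, Y_j$, and interleaving---the lemma reduces to proving the identity
\[ \lcis(X, Y) = \ell \cdot (4n - 2) + M, \qquad M := \max_{i,j} \lcis(X_i, Y_j). \]
The length bound $\Oh{\delta n \log n} + \sum_i |X_i| + \sum_j |Y_j|$ is immediate from Lemma~\ref{le:sep-len} after the inflation, so all the work lies in this identity.

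The ``$\geq$'' direction follows by picking a maximizing pair $(i^*, j^*)$ and gluing three increasing subsequences: a prefix-separator LCIS of length $\ell \cdot (i^* + j^* + n)$ (Lemma~\ref{le:sep-lcis} lifted through Lemma~\ref{le:infl}), an LCIS of $X_{i^*}, Y_{j^*}$ of length $M$, and a suffix-separator LCIS of length $\ell \cdot (2(n-1) - i^* - j^* + n)$ (inflated Equation~\eqref{eq:hat}). The alphabet shifts make the concatenation globally strictly increasing.

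For the ``$\leq$'' direction the plan is to take any common increasing subsequence $L$ of $X$ and $Y$ and decompose it as $L = S \conc G \conc \hat S$, where $S$ uses only symbols from $A, B$, the middle part $G$ uses only symbols from the $X_i, Y_j$, and $\hat S$ uses only symbols from $\hat A, \hat B$; this split is forced by the three disjoint alphabet ranges. Letting the last element of $S$ fall in $\alpha_i$ and $\beta_j$, and the first element of $\hat S$ fall in $\hat\alpha_{\hat i}$ and $\hat\beta_{\hat j}$, necessarily $i \leq \hat i$ and $j \leq \hat j$, and the inflated Lemma~\ref{le:sep-lcis} gives $|S| \leq \ell \cdot (i + j + n)$ and $|\hat S| \leq \ell \cdot (2(n-1) - \hat i - \hat j + n)$.

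The main obstacle is bounding $|G|$, and this is where the hypothesis on the $X_i, Y_j$ is used. In the clean case $(i, j) = (\hat i, \hat j)$, the middle part $G$ lives inside a single pair $X_i, Y_j$, so $|G| \leq M$ and $|L| \leq \ell \cdot (4n - 2) + M$. The delicate case is $i < \hat i$ or $j < \hat j$: then $G$ is an increasing subsequence of both $X_i \conc \cdots \conc X_{\hat i}$ and $Y_j \conc \cdots \conc Y_{\hat j}$, so the hypothesis yields $|G| \leq \delta \cdot \min(\hat i - i + 1, \hat j - j + 1)$. The remaining computation is to check that this bound, combined with $|S| + |\hat S| \leq \ell \cdot (4n - 2 - (\hat i - i) - (\hat j - j))$, with $\ell \geq \delta$, and with the elementary inequality $\min(a+1, b+1) \leq a + b$ whenever $\max(a,b) \geq 1$, yields $|L| \leq \ell \cdot (4n - 2)$. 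This is precisely where the inflation schedule $\ell \geq \delta$ is calibrated to cancel the worst-case growth of $G$ across the extra gadgets it traverses.
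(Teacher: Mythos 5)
Your proposal is correct and follows essentially the same route as the paper's own proof: the same three-part decomposition $L = S\conc G\conc \hat S$ forced by the ordered alphabet ranges, the same bounds on $|S|$ and $|\hat S|$ via the inflated separator lemma, and the same two-case analysis in which the inequality $\min(\hat i - i,\hat j - j)+1 \le (\hat i - i)+(\hat j - j)$ (valid since at least one difference is positive) combined with $\ell \ge \delta$ absorbs the middle part $G$ into the slack of the separators. Nothing further is needed.
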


\onlyFULL{ We are ready to prove the main result of the paper.}
\begin{proof}[Proof of Theorem~\ref{thm:lcis}]
Let $\mathcal{U} = \{u_0, \ldots, u_{n-1}\}$, $\mathcal{V} = \{v_0, \ldots, v_{n-1}\}$ be two sets of $d$-dimensional binary vectors. In Section~\ref{sec:vg} we constructed vector gadgets $U_i$ and $V_j$, for $i, j \in \{0, 1,\ldots, n-1\}$, such that $\lcis(U_i,V_j) = d - (u_i \dotp v_j)$. To these sequences we apply Lemma~\ref{lem:2-lcis-main}, with $\delta = 2d$, obtaining sequences $X$ and $Y$ of length $\Oh{n \log n \poly(d)}$ such that $\lcis(X,Y) = C + d - \min_{i,j} (u_i \dotp v_j)$ for a constant $C$. This reduction, combined with an $\Oh{n^{2-\eps}}$ time algorithm for LCIS, would yield an $\Oh{n^{2-\eps} \polylog(n) \poly(d)}$ algorithm for OV, refuting Hypothesis~\ref{hyp:ov} and, in particular, SETH.
\end{proof}

\onlyFULL{
With the reduction above, one can not only determine whether there exist a pair of orthogonal vectors or not, but also, in the latter case, calculate the minimum inner product over all pairs of vectors. Formally, by the above construction, we can reduce even the Most Orthogonal Vectors problem, as defined in Abboud et al.~\cite{AbboudBVW15} to LCIS. This bases hardness of LCIS already on the inability to improve over exhaustive search for the MAX-CNF-SAT problem, which is a slightly weaker conjecture than SETH.}

\onlyFULL{

\section{Matching Lower Bound for Output-Dependent Algorithms}

To prove our bivariate conditional lower bound of $(nL)^{1-o(1)}$, we provide a reduction from an OV instance with unequal vector set sizes.

\begin{proof}[Proof of Theorem~\ref{thm:lcis-nL}]
Let $0 < \gamma \le 1$ be arbitrary and consider any OV instance with sets $\mathcal{U}, \mathcal{V} \subseteq \{0, 1\}^d$ with $|\mathcal{U}| = n$, $|\mathcal{V}| = m = n^{\gamma}$ and $d=n^{o(1)}$. We reduce this problem, in linear time in the output size, to an LCIS instance with sequences $X$ and $Y$ satisfying $|X| = |Y| = \Oh{nd \log n}$ and an LCIS of length $\Oh{n^{\gamma}d}$. Theorem~\ref{thm:lcis-nL} is an immediate consequence of the reduction: an $\Oh{(nL)^{1-\varepsilon}}$ time LCIS algorithm would yield an OV algorithm running in time $\Oh{n^{1+\gamma-\eps'}}$, which would refute Hypothesis~\ref{hyp:ov} and, in particular, SETH.

It remains to show the reduction itself. Let $\mathcal{U} = \{u_0, \ldots, u_{n-1}\}$ and $\mathcal{V} = \{v_0, \ldots, v_{m-1}\}$
be two sets of $d$-dimensional $(0,1)$-vectors. By adding dummy vectors, we can assume without loss of generality that $n = q \cdot m$ for some integer $q$.

We use the vector gadgets $U_i$ and $V_j$ from Section \ref{sec:2-lcis-final}. This time, however, we group together every $q$ consecutive gadgets, i.e.,  $(U_0, \ldots, U_{q-1})$, $(U_{q}, \ldots, U_{2q-1})$,  and so on. Specifically, let $U_i^{[r]}$ be the $i$-th vector gadget shifted by an integer $r$ (i.e. with $r$ added to all its elements). We define, for each $l \in \{0, 1, \ldots, m-1\}$,
\[ \bar U_l = U_{lq}^{[2qd]} U_{lq+1}^{[2qd-2d]} \ldots U_{lq+q-1}^{[2d]}. \]
In a similar way, for $j \in \{0, 1, \ldots, m-1\}$, we replicate every $V_j$ gadget $q$ times with appropriate shifts, i.e., 
\[ \bar V_j = V_{j}^{[2qd]} V_{j}^{[2qd-2d]} \ldots V_{j}^{[2d]}. \]

Let us now determine $\lcis(\bar U_l, \bar V_j)$. No two gadgets grouped in $\bar U_l$ can contribute to an LCIS together, as the later one would have smaller elements. Therefore, only one $U_i$ gadget can be used, paired with the one copy of $V_j$ having the matching shift. This yields $\lcis(\bar U_l, \bar V_j) = \max_{lq \leq i < lq+q} \lcis (U_i, V_j)$, and in turn, also $\max_{l, j} \lcis (\bar U_l, \bar V_j) = \max_{i, j} \lcis (U_i, V_j) = d - \min_{i, j} (u_i \dotp v_j)$. 

Observe that every $\bar U_l$ is a concatenation of several $U_i$ gadgets, each one shifted to make its elements smaller than previous ones. Therefore, any increasing subsequence of $\bar U_l$ must be contained in a single $U_i$, and thus cannot be longer than $2d$. The same argument applies to every $\bar V_j$. Therefore, we can apply Lemma \ref{lem:2-lcis-main}, with $\delta = 2d$, to these sequences, obtaining $\bar X$ and $\bar Y$ satisfying:

\[ \lcis(\bar X,\bar Y) = C + d - \min_{i, j} (u_i \dotp v_j). \]

Recall that $C$ is some constant dependent only on $m$ and $d$, and $C = \Oh{md}$. The length of both $\bar X$ and $\bar Y$ is $\Oh{d m \log m + m q d} = \Oh{n d \log n}$, and the length of the output is $\Oh{md}$, as desired. 
\end{proof}

\section{Hardness of $k$-LCIS}

In this section we show that, assuming SETH, there is no $\Oh{n^{k-\eps}}$
algorithm for the $k$-LCIS problem, i.e., we prove Theorem~\ref{thm:klcis}.
To obtain this lower bound we show a reduction from the $k$-Orthogonal Vectors
problem (for definition, see Section~\ref{sec:prelim}).
There are two main ingredients of the reduction, i.e.~separator sequences and
vector gadgets, and both of them can be seen as natural generalizations of those
introduced in Section~\ref{sec:lcis}.

\subsection{Generalizing separator sequences}

Please note that in this section we use a notation
which is not consistent with the one from Section~\ref{sec:lcis},
because it has to accommodate indexing over $k$ sequences.

The aim of this section is to show, for any $N$ that is a power of two, how to construct $k$ sequences $A_1, A_2, \ldots, A_k$ such that each of them can be
split into $N$ blocks, i.e.~$A_i = \alpha_i^0\alpha_i^1\ldots\alpha_i^{N-1}$,
and for any choice of $j_1, j_2, \ldots, j_k \in \set{0, 1, \ldots, N-1}$
\begin{equation}
\label{eq:k-sep}
\lcis(\alpha_1^0\ldots\alpha_1^{j_1},
      \alpha_2^0\ldots\alpha_2^{j_2},
      \ldots,
      \alpha_k^0\ldots\alpha_k^{j_k}) = j_1 + j_2 + \cdots + j_k + N.
\end{equation}

As before, we construct separator sequences inductively, doubling the number of blocks in each step. Again, for $N=1$, we define the sequences by $A_i = \seq{1}, i \in \{1,\dots,k\}$. 

Suppose we have $N$-block sequences
$A_1, A_2, \ldots, A_k$, $A_i = \alpha_i^0\alpha_i^1\ldots\alpha_i^{N-1}$ as above. We show how to construct $2N$-block sequences
$B_1, B_2, \ldots, B_k$, $B_i = \beta_i^0\beta_i^1\ldots\beta_i^{2N-1}$. Note that inflation properties still hold for $k$ sequences, as the proof of Lemma \ref{le:infl} works in exactly the same way, i.e.~inflating all the sequences increases their LCIS by a factor of $2$.

To obtain $B_i$, we first inflate $A_i$, and then append a tail gadget after
each block $\alpha_i^j$. However, tail gadgets are now more involved.

Let $s$ denote the largest element appearing in $A_1, A_2, \ldots, A_k$.
Then the blocks of $B_i$ are
\[\beta_i^{2j} = \infl(\alpha_i^j) \conc T_i^0, \quad\quad \beta_i^{2j+1} = T_i^1,\]
where $T_i^0$ is the sorted sequence of numbers of the form $2s+x$ for
$x\in\set{1,\ldots,2^k-1}$ such that the $i$-th bit in the binary representation
of $x$ equals $0$, while $T_i^1$ contains those with $i$-th bit set to $1$. Note
that for $k=2$ this exactly leads to the construction from Section~\ref{sec:lcis}.

During one construction step, every block doubles its size, and constant number of elements (precisely, $2^k-1$) is added for every original block. Therefore, the length $L(N)$ of $N$-block sequences satisfies the recursive equation:
\[L(2N) = 2 \cdot L(N) + (2^k-1) \cdot N\]
which yields $L(N) = \Oh{N \log N}$. Note also that the size of the alphabet $S(N)$ used in $N$-block sequences gives the equation $S(2N) = 2 S(N) + 2^k - 1$, as a constant number of elements is added in every step. Therefore $S(N) = \Oh{N}$.

\begin{lemma}
The constructed sequences satisfy, for any $j_1, j_2, \ldots, j_k \in \set{0, 1, \ldots, 2N-1}$,
\begin{equation*}
\lcis(\beta_1^0\ldots\beta_1^{j_1},
      \beta_2^0\ldots\beta_2^{j_2},
      \ldots,
      \beta_k^0\ldots\beta_k^{j_k}) = j_1 + j_2 + \cdots + j_k + 2N, 
\end{equation*}
\end{lemma}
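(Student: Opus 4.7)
The plan is to proceed by induction on the construction step: assuming the identity $\lcis(\alpha_1^0\ldots\alpha_1^{j_1},\dots,\alpha_k^0\ldots\alpha_k^{j_k}) = j_1+\cdots+j_k+N$ for the $N$-block sequences $A_1,\ldots,A_k$, I would prove the analogous statement with $2N$ for the $2N$-block sequences $B_1,\ldots,B_k$. Throughout, I write $j_i = 2j_i' + b_i$ with $b_i \in \{0,1\}$, and let $s$ denote the largest element appearing in the $A_i$'s. The proof naturally splits into a ``$\ge$'' direction (exhibiting a long enough common increasing subsequence) and a ``$\le$'' direction (bounding any such subsequence).

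For the ``$\ge$'' direction, I would first observe that $\infl(\alpha_i^0\ldots\alpha_i^{j_i'})$ is a subsequence of $\beta_i^0 \ldots \beta_i^{j_i}$ for each $i$. Combining the inductive hypothesis with the $k$-ary analogue of Lemma~\ref{le:infl} thus yields a common increasing subsequence of length $2(j_1'+\cdots+j_k'+N)$ consisting only of elements $\le 2s$. Next, I would append $b_1+\cdots+b_k$ more elements from the tail gadgets. The key algebraic observation is that an element $2s+y$ with $y\in\{1,\dots,2^k-1\}$ belongs to block $\beta_i^{q}$ of sequence $i$ precisely when the $i$-th bit of $y$ equals $q \bmod 2$. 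So, setting $I = \{i : b_i = 1\}$ and enumerating $I = \{i_1 < \cdots < i_r\}$, I would choose $y_l := \sum_{t\le l} 2^{i_t - 1}$: the resulting values $2s + y_1 < \cdots < 2s+y_r$ form a valid chain in each available tail, since each $y_l$ has support in $I$ (making it compatible with sequences $i\notin I$) and each coordinate's bit sequence is non-decreasing in $l$ (making it compatible with sequences $i\in I$). Concatenating these with the inflated chain gives a common increasing subsequence of length $j_1+\cdots+j_k+2N$.

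For the ``$\le$'' direction, let $L$ be any common increasing subsequence. Since $L$ is strictly increasing, it splits canonically as $L = L_1 \conc L_2$ with $L_1$ containing all elements $\le 2s$ and $L_2$ all elements $> 2s$. If $L_2$ is empty, I would apply the inductive hypothesis and Lemma~\ref{le:infl} directly to the inflated parts: $|L|\le 2(\sum_i \lfloor j_i/2\rfloor + N) \le \sum_i j_i + 2N$. Otherwise, let $q_i$ be the block index in sequence $i$ of the first element of $L_2$. The bound $|L_1| \le 2(\sum_i \lfloor q_i/2\rfloor + N)$ follows from the induction hypothesis, noting that odd-indexed $\beta$-blocks contain no elements $\le 2s$. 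To bound $|L_2|$, I would argue that any two consecutive elements $2s+y < 2s+y'$ of $L_2$ have distinct bit patterns and so differ in some coordinate $i$; by the parity observation above, the block index in sequence $i$ must then strictly increase from one element to the next. Telescoping across consecutive pairs yields $|L_2|-1 \le \sum_i (j_i - q_i)$. Combining the two bounds gives $|L| \le \sum_i j_i + 2N + 1 - \sum_i (q_i \bmod 2)$, and since the first element of $L_2$ has at least one bit equal to $1$, at least one $q_i$ is odd, so $|L| \le \sum_i j_i + 2N$.

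The hardest part, I expect, will be orchestrating the parity arithmetic so that both directions lean on the same structural feature of the tail gadgets: every $k$-tuple of blocks $(\beta_1^{t_1},\dots,\beta_k^{t_k})$ with not-all-even indices shares exactly one common element above $2s$ (namely $2s+y$ whose bits encode the parities of the $t_i$), while all-even tuples share none. The ``$\ge$'' construction exploits this by climbing through the nontrivial bit patterns, whereas the ``$\le$'' argument exploits it to force progress in block indices. The base case $N=1$, where $A_i = \seq{1}$, can be verified directly against the explicit form of the two-block sequences $B_i$.
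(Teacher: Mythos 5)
Your proof is correct, and it reaches the result by a somewhat different route than the paper. Both arguments rest on the same structural fact about the tail gadgets — that $2s+y$ lies in block $\beta_i^q$ (among elements above $2s$) exactly when the $i$-th bit of $y$ equals $q \bmod 2$, so a $k$-tuple of blocks with not-all-even indices shares exactly one element above $2s$ and an all-even tuple shares none. The difference is in the mechanics. The paper runs an \emph{inner} induction on $j_1+\cdots+j_k$ with a strengthened hypothesis (the LCIS can be chosen to end on an element at most $2s+x(j_1,\dots,j_k)$), peeling off one tail element at a time in both directions. You instead use only the outer induction on the construction step and argue directly: for ``$\ge$'' you exhibit the whole tail chain $2s+y_1<\cdots<2s+y_r$ at once via the nested supports $y_l=\sum_{t\le l}2^{i_t-1}$ (the monotonicity of each bit in $l$ is exactly what makes the chain traceable through one $T_i^0$ followed by one $T_i^1$ per sequence), and for ``$\le$'' you split $L$ at the threshold $2s$ and telescope block indices over consecutive elements of $L_2$, closing the gap with the observation that the first element of $L_2$ forces some $q_i$ to be odd. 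Your version buys a self-contained, non-inductive bound on the tail contribution and makes explicit \emph{why} at most $\sum_i(j_i \bmod 2)$ extra elements can be gained above $2s$; the paper's version is shorter to state because the strengthened hypothesis absorbs the bookkeeping. One presentational caution: when you define $q_i$ as ``the block index in sequence $i$ of the first element of $L_2$,'' you should say explicitly that you first fix an embedding of $L$ into each $B_i$, since a value $2s+y$ occurs in many blocks of $B_i$; with that fixed, every step of your argument (including $t'>t$ for consecutive elements differing in bit $i$, because all above-$2s$ elements of a single block share the same $i$-th bit parity) goes through.
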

\begin{proof}
We prove the claim by induction on $j_1 + j_2 + \cdots + j_k$. In fact, to make the induction work, we need to prove a stronger statement that there always exists a corresponding LCIS that ends on an element less than or equal to $2s+x(j_1,\dots,j_k)$, where $x(j_1,\dots, j_k)$ is the integer given by the binary representation $(j_1 \bmod 2, \dots, j_k \bmod 2)$.

By the inflation properties and the observation that $T_1^0, \dots, T_k^0$ have no common elements, we obtain the base case $\lcis(\beta_1^0, \dots, \beta_k^0) = 2\cdot\lcis(\alpha_1^0,\dots,\alpha_k^0) = 2N$, with a corresponding LCIS using only elements bounded by $2s$, as desired.

Let $j_1, j_2, \ldots, j_k$ be indices with $j_1+\cdots+j_k > 0$. Let us first construct a common increasing subsequence of length at least $j_1 + \cdots + j_k + 2N$. If all indices $j_1,\dots, j_k$ are even, then, for every $i\in\set{1,\ldots,k}$, the prefix $\beta_i^0 \dots \beta_i^{j_i}$ contains $\infl(\alpha_i^0 \dots \alpha_i^{j_i/2})$ as a subsequence. Thus we can find, by inflation properties, a common increasing subsequence of length $2 \cdot (j_1/2 + \cdots + j_k/2 + N) = j_1 + \cdots + j_k + 2N$, as desired. Now, let $j_i$ be any odd index, and let $L$ be the LCIS of the prefixes corresponding to $j_1,\dots, j_{i-1}, j_i - 1, j_{i+1}, \dots, j_k$, which ends on an element bounded by $x(j_1,\dots,j_{i-1},0,j_{i+1},\dots,j_k)$, of length $j_1 + \dots + j_k + 2N -1$ (which exists by the induction hypothesis). Then $L \conc x(j_1,\dots, j_{i-1}, 1, j_{i+1},\dots,j_k)$ is an LCIS for the prefixes corresponding to $j_1,\dots,j_k$: Indeed, $2s+x(j_1,\dots,j_k)$ is a common member of $T_1^{j_1 \bmod 2}, \dots, T_k^{j_k \bmod 2}$, the last parts of these prefixes, and this element is larger and appears later in the sequences than all elements in $L$ (since all $T_i^j$'s are sorted in the increasing order).

For the converse, let $L$ denote the LCIS of $\beta_1^0\ldots\beta_1^{j_1}$, $\beta_2^0\ldots\beta_2^{j_2}$, $\ldots$, $\beta_k^0\ldots\beta_k^{j_k}$.
Note that if the last symbol of $L$ does not come from the last blocks,
i.e.~$\beta_1^{j_1}, \beta_2^{j_2}, \ldots, \beta_k^{j_k}$, then $L$ is an LCIS of prefixes corresponding to some $j_1', \dots, j_k'$ with $j_1' + \cdots + j_k' < j_1 + \cdots + j_k$ and the claim follows from the induction hypotheses. Thus, we may assume that $L$ ends on a common symbol of the last blocks.

If all the indices are even, the last blocks share only
elements less than or equal to $2s$ (since $T_1^0, \dots, T_k^0$ share no elements), thus $L$ is the LCIS of $\infl(\alpha_i^0, \dots, \alpha_i^{j_i/2}), i\in \{1,\dots,k\}$ and the claim follows from the inflation properties. Otherwise, the only element the last blocks have in
common is $x(j_1, j_2, \ldots, j_k)$, and thus $L= L' \conc x(j_1,\dots, j_k)$, where $L'$ is the LCIS of prefixes corresponding to some $j_1', \dots, j_k'$ with $j_1' + \cdots + j_k' < j_1 + \cdots + j_k$. Thus, $|L| \le j_1' + \cdots + j_k' + 2N + 1\le j_1 + \cdots + j_k + 2N$, as desired.
\end{proof}

\subsection{Generalizing vector gadgets}
\label{sec:klcis-vg}

Each vector gadget is the concatenation of \emph{coordinate gadgets}.
Coordinate gadgets for $j$-th coordinate use elements from the range
$\{kj + 1, \ldots, kj + k\}$. If a coordinate is $0$, the corresponding gadget
contains all $k$ elements sorted in decreasing order, otherwise the gadget for
the $i$-th sequence skips the $kj+i$ element. Formally,
\[\VG_i(u) = \CG_i^0(u[0]) \conc \CG_i^1(u[1]) \conc \cdots \conc \CG_i^{d-1}(u[d-1]),\]
where
\begin{align*}
\CG_i^j(0) &= \seq{kj + k, kj + (k-1), \ldots, kj + 1}, \\
\CG_i^j(1) &= \seq{kj + k, kj + (k-1), \ldots, kj + (i+1), kj + (i-1), \ldots, kj + 1}.
\end{align*}
Thus, if all $k$ vectors have the $j$-th coordinate equal $1$, there is no
common element in the corresponding gadgets. Otherwise, if at least one, say $i$-th,
vector has the $j$-th coordinate equal $0$, the element $kj+i$ appears in all
coordinate gadgets. Since the coordinate gadgets are sorted in decreasing order,
their LCIS cannot exceed $1$. Therefore,
\[\lcis(\CG_1^j(u_1), \CG_2^j(u_2), \ldots, \CG_k^j(u_k)) = 1 - \prod_{i=1}^k u_i[j],\]
and ultimately
\[\lcis(\VG_1(u_1), \VG_2(u_2), \ldots, \VG_k(u_k)) = d - \sum_{j=0}^{d-1}\prod_{i=1}^k u_i[j].\]

\subsection{Putting pieces together}

We can finally prove our lower bound for $k$-LCIS, i.e., Theorem~\ref{thm:klcis}.

\begin{proof}[Proof of Theorem~\ref{thm:klcis}]
Let $\mathcal{U}_1,\dots,\mathcal{U}_k \subseteq \{0,1\}^d$ be a $k$-OV instance with $|\mathcal{U}_i| = n$.
By at most doubling the number of vectors in each set, we may assume without loss of generality that $n$ is a power of two.

We construct separator sequences consisting of $n$ blocks.
Inflate the sequences $\ceil{\log_2 kd}$ times, thus increasing their
length by a factor $\ell = 2^{\ceil{\log_2 kd}}$, and subtract from all their
elements a constant large enough for them to become smaller than all elements
of vector gadgets. Let $A_i = \alpha_i^0 \dots \alpha_i^{n-1}$ denote the thus constructed separator sequence corresponding to set $\mathcal{U}_i$. 

Analogously (and as in the proof of Theorem~\ref{thm:lcis}), construct, for each $i\in \{1,\dots,k\}$, the separator sequence $\hat A_i = \hat\alpha_i^0, \dots, \hat\alpha_i^{n-1}$ by reversing $A_i$, replacing each element by its additive inverse, and adding a constant large enough to make all the elements larger than vector gadgets (note that each $\hat\alpha_i^j$ equals the reverse of $\alpha^{n-j-1}_i$, with negated elements, shifted by an additive constant). In this way, the analogous property to Equation~\eqref{eq:k-sep} holds for suffixes instead of prefixes.

Finally, construct sequences $X_1, X_2,\ldots,X_k$ by defining
\[ X_i = \alpha_i^0 \VG_i(u_i^0) \hat\alpha_i^0
         \alpha_i^1 \VG_i(u_i^1) \hat\alpha_i^1
         \ldots
         \alpha_i^{n-1} \VG_i(u_i^{n-1}) \hat\alpha_i^{n-1}, \]
where the $\VG_i$ are defined as in Section~\ref{sec:klcis-vg}.
It is straightforward to rework the proof of Theorem~\ref{thm:lcis} to verify that these sequences fulfill
\[\lcis(X_1, X_2, \ldots, X_k) = \ell \cdot (k\cdot(n - 1) + 2n) + d - m,\]
where
$m=\min_{u_1\in\mathcal{U}_1, u_2\in\mathcal{U}_2, \ldots, u_k\in\mathcal{U}_k}
\sum_{j=0}^{d-1}\prod_{i=1}^k u_i[j]$.

By this reduction, an $\Oh{n^{k-\varepsilon}}$ time algorithm for $k$-LCIS would yield an $\Oh{n^{k-\varepsilon'}}$ time $k$-OV algorithm (for any dimension $d=n^{o(1)}$), thus refuting Hypothesis~\ref{hyp:kov} and, in particular, SETH.
\end{proof}

\section{Hardness of $k$-LCWIS}

We shortly discuss the proof of Theorem~\ref{thm:lcwis}.

\begin{proof}[Proof sketch of Theorem~\ref{thm:lcwis}]

Note that our lower bound for $k$-LCIS almost immediately yields a lower bound for $k$-LCWIS: Trivially, each common increasing subsequence of $X_1, \dots, X_k$ is also a common \emph{weakly} increasing subsequence. The claim then follows after carefully verifying that, in the constructed sequences, we cannot obtain longer common weakly increasing subsequences by reusing some symbols. 

Our claim for $k$-LCWIS is slightly stronger, however. In particular, we aim to reduce the size of the alphabet over which all the sequences are defined. For this, the key insight is to replace the \emph{inflation} operation $\infl(\seq{a_0, \dots, a_{n-1}}) = \seq{2a_0-1, 2a_0, \dots, 2a_{n-1} -1, 2a_{n-1}}$ by
\[ \mathrm{inflate}'(\seq{a_0, \dots, a_{n-1}}) = \seq{a_0, a_0, \dots, a_{n-1} , a_{n-1}}, \]
which does not increase the alphabet size, but still satisfies the desired property for $k$-LCWIS.

Replacing this notion in the proof of Theorem~\ref{thm:klcis}, we obtain final sequences $X_1, \dots, X_k$ by combining separator gadgets over alphabets of size $\Oh{\log n}$ with vector gadgets over alphabets of size $\Oh{d}$, where $d$ is the dimension of the vectors in the $k$-OV instance. Correctness of this construction under $k$-LCWIS can be verified by reworking the proof of Theorem~\ref{thm:klcis}. Thus, we construct hard $k$-LCWIS instances over an alphabet of size $\Oh{\log n + d}$, and the claim follows.
\end{proof}

\section{Strengthening the Hardness}

In this section we show that a natural combination of constructions proposed in the previous sections with the idea of \emph{reachability gadgets} introduced by Abboud et al.~\cite{AbboudHVWW16} lets us strengthen our lower bounds to be derived from considerably weaker assumptions than SETH. Before we do this, we first need to introduce the notion of branching programs.

A \emph{branching program} of width $W$ and length $T$ on $N$ Boolean input variables $x_1, x_2, \ldots, x_N \in \{0,1\}$ is a directed acyclic graph on $W \cdot T$ nodes, arranged into $T$ \emph{layers} of size $W$ each. A node in the $k$-th layer may have outgoing edges only to the nodes in the $(k+1)$-th layer, and for every layer there is a variable $x_i$ such that every edge leaving this layer is labeled with a constraint of the from $x_i=0$ or $x_i=1$. There is a single \emph{start} node in the first layer and a single \emph{accept} node in the last layer. We say that the branching program \emph{accepts} an input $x \in \{0,1\}^N$ if there is a path from the start node to the accept node which uses only edges that are labeled with constraints satisfied by the input $x$.

The expressive power of branching programs is best illustrated by the theorem of Barrington~\cite{Barrington89}. It states that any depth-$d$ fan-in-$2$ Boolean circuit can be expressed as a branching program of width $5$ and length $4^d$. In particular, $\mathsf{NC}$-circuits can be expressed as constant width quasipolynomial length branching programs.

Given a branching program $P$ on $N$ input variables, the Branching Program Satisfiability problem (BP-SAT) asks if there exists an assignment $x \in \{0,1\}^N$ such that $P$ accepts $x$. Abboud et al.~\cite{AbboudHVWW16} gave a reduction from BP-SAT to LCS (and some other related problems, such as Edit Distance) on two sequences of length $2^{N/2} \cdot T^{\Oh{\log W}}$. The reduction proves that a strongly subquadratic algorithm for LCS would imply, among others, exponential improvements over exhaustive search for satisfiability problems not only on CNF formulas (i.e.~refuting SETH), but even $\mathsf{NC}$-circuits and circuits representing $\oh{\sqrt{n}}$-space nondeterministic Turing machines. Moreover, even a sufficiently large polylogarithmic improvement would imply nontrivial results in circuit complexity. We refer to the original paper~\cite{AbboudHVWW16} for an in-depth discussion of these consequences.

In this section we prove Theorem~\ref{thm:lcis-bp} and thus show that a subquadratic algorithm for LCIS would have the same consequences. Our reduction from OV to LCIS (presented in Section~\ref{sec:lcis}) is built of two ingredients: (1) relatively straightforward vector gadgets, encoding vector inner product in the language of LCIS, and (2) more involved separator sequences, which let us combine many vector gadgets into a single sequence. In order to obtain a reduction from BP-SAT we will need to replace vector gadgets with more complex reachability gadgets. Fortunately, reachability gadgets for LCIS can be constructed in a similar manner as reachability gadgets for LCS proposed in~\cite{AbboudHVWW16}.

\newcommand{\RGx}[2]{\mathrm{RG}_\x^{#1\to#2}}
\newcommand{\RGy}[2]{\mathrm{RG}_\y^{#1\to#2}}
\newcommand{\bRGx}[2]{\overline{\mathrm{RG}}_\x^{#1\to#2}}
\newcommand{\bRGy}[2]{\overline{\mathrm{RG}}_\y^{#1\to#2}}
\newcommand{\fRGx}{\mathrm{RG}_\x}
\newcommand{\fRGy}{\mathrm{RG}_\y}

\begin{proof}[Proof sketch of Theorem~\ref{thm:lcis-bp}]

Given a branching program, as in~\cite{AbboudHVWW16}, we follow the split-and-list technique of Williams~\cite{Williams05}. Assuming for ease of presentation that $N$ is even, we split the input variables into two halves: $x_1,\ldots,x_{N/2}$ and $x_{N/2+1},\ldots,x_N$. Then, for each possible assignment $a \in \{0,1\}^{N/2}$ of the first half we list a reachability gadget $\fRGx(a)$, and similarly, for each possible assignment $b \in \{0,1\}^{N/2}$ of the second half we list a reachability gadget $\fRGy(b)$. We shall define the gadgets such that there exists a constant $C$ (depending only on the branching program size) such that $\lcis(\fRGx(a),\fRGy(b))=C$ if and only if $a \conc b$ is an assignment accepted by the branching program, and otherwise $\lcis(\fRGx(a),\fRGy(b))<C$. The reduction is finished by applying Lemma~\ref{lem:2-lcis-main} to the constructed gadgets in order to obtain two sequences such that their LCIS lets us determine whether a satisfying assignment to the branching program exists. The rest of the proof is devoted to constructing suitable reachability gadgets.

We assume without loss of generality that $T=2^t+1$ for some integer $t$. For every $k\in\{0,1,\ldots,t\}$ and for every two nodes $u, v$ being $2^k$ layers apart from each other we want to construct two reachability gadgets $\RGx{u}{v}$ and $\RGy{u}{v}$ such that, for some constant $C_k$,
\[\lcis(\RGx{u}{v}(a), \RGy{u}{v}(b)) \begin{cases}
= C_k & \text{if there is a path from }u\text{ to }v\text{ satisfied by } a \conc b, \\
< C_k & \text{otherwise},
\end{cases}\]
holds for all $a,b \in \{0,1\}^{N/2}$.

Consider $k=0$, i.e., designing reachability gadgets for nodes in neighboring layers $L_j$ and $L_{j+1}$. There is a variable $x_i$ such that all edges between $L_j$ and $L_{j+1}$ are labeled with a constraint $x_i=0$ or $x_i=1$. We say \emph{the left half is responsible for $x_i$} if $x_i$ is among the first half $x_1,\dots,x_{N/2}$ of variables; otherwise, we say the \emph{right half is responsible for $x_i$}. We set $\RGx{u}{v}(a)$ to be an empty sequence if the left half is responsible for $x_i$ and there is no edge from $u$ to $v$ labeled $x_i = a_i$; otherwise, we set $\RGx{u}{v}(a) = \seq{0}$. Similarly, $\RGy{u}{v}(b)$ is an empty sequence if the right half is responsible and there is no edge from $u$ to $v$ labeled with $x_i = b_{i-N/2}$; otherwise $\RGy{u}{v}(b) = \seq{0}$. It is easy to verify that such reachability gadgets satisfy the desired property for $C_0=1$.

For $k>0$, let $w_1, w_2, \ldots w_{W}$ be the nodes in the layer exactly halfway between $u$ and $v$. Observe that there exists a path from $u$ to $v$ if and only if there exists a path from $u$ to $w_i$ and from $w_i$ to $v$ for some $i\in\{1,2,\ldots,W\}$.

Let $\bRGx{w_i}{v}$ and $\bRGy{w_i}{v}$ denote the sequences $\RGx{w_i}{v}$ and $\RGy{w_i}{v}$ with every element increased by a constant large enough so that all elements are larger than all elements of $\RGx{u}{w_i}$ and $\RGy{u}{w_i}$. Observe that $\lcis(\RGx{u}{w_i}(a)\conc\bRGx{w_i}{v}(a),\RGy{u}{w_i}(b)\conc\bRGy{w_i}{v}(b))$ equals $2\cdot C_{k-1}$ if there is a path $u \leadsto w_i \leadsto v$ satisfied by $a \conc b$, and otherwise it is less than $2\cdot C_{k-1}$. Now, for every $i$ take a different constant $q_i$ and add it to both $\RGx{u}{w_i}\conc\bRGx{w_i}{v}$ and $\RGy{u}{w_i}\conc\bRGy{w_i}{v}$ so that their alphabets are disjoint, and therefore, for $i \neq j$, $\lcis((\RGx{u}{w_i}(a)\conc\bRGx{w_i}{v}(a))+q_i,(\RGy{u}{w_j}(b)\conc\bRGy{w_j}{v}(b))+q_j) = 0$ (where $+$ denotes element-wise addition). Finally, apply Lemma~\ref{lem:2-lcis-main} to these $W$ pairs of concatenated reachability gadgets (where we choose $\delta$ as the maximum length of these gadget) to obtain two reachability gadgets $\RGx{u}{v}$ and $\RGy{u}{v}$ such that $\lcis(\RGx{u}{v}(a), \RGy{u}{v}(b))$ equals $C+2\cdot C_{k-1}$ (for a constant $C$ resulting from the application of Lemma~\ref{lem:2-lcis-main}) if there exists (for some $i\in\{1,2,\ldots,W\}$) a path $u \leadsto w_i \leadsto v$ satisfied by $a \conc b$, and is strictly smaller otherwise, as desired.

Let $u_{\mathrm{start}}$ and $u_{\mathrm{accept}}$ denote the start node and the accept node of the branching program. Then, $\fRGx=\RGx{u_{\mathrm{start}}}{u_{\mathrm{accept}}}$ and $\fRGy=\RGy{u_{\mathrm{start}}}{u_{\mathrm{accept}}}$ satisfy the property that
\[\lcis(\fRGx(a), \fRGy(b)) \begin{cases}
= C_t & \text{if the branching program accepts } a \conc b, \\
< C_t & \text{otherwise}.
\end{cases}\]

Since $\fRGx(a)$ and $\fRGy(b)$ are constructed in $t$ steps of the inductive construction, and each step increases the length of gadgets by a factor of $\Oh{W\log W}$, their final length can be bounded by $\Oh{(W\log W)^t}$, which is $T^{\Oh{\log W}}$. Combining the reachability gadgets $\fRGx(a)$, $a\in \{0,1\}^{N/2}$ and $\fRGy(b)$, $b\in \{0,1\}^{N/2}$ using Lemma~\ref{lem:2-lcis-main} (where we choose $\delta$ as the maximum length of the reachability gadgets) yields the desired strings $X,Y$ of length $2^{N/2} \cdot N \cdot T^{\Oh{\log W}}$ whose LCIS lets us determine satisfiability of the given branching program, thus finishing the proof.
\end{proof}

Similar techniques can be used to analogously strengthen other lower bounds in our paper.
} 

\section{Conclusion and Open Problems}
\label{sec:conclusion}

We prove a tight quadratic lower bound for LCIS, ruling out strongly subquadratic-time algorithms under SETH. It remains open whether LCIS admits mildly subquadratic algorithms, such as the Masek-Paterson algorithm for LCS~\cite{MasekP80}. Note, however, that our reduction from BP-SAT gives an evidence that shaving many logarithmic factors is immensely difficult. Finally, we give tight SETH-based lower bounds for $k$-LCIS.

For the related variant LCWIS that considers weakly increasing sequences, strongly subquadratic-time algorithms are ruled out under SETH for slightly superlogarithmic alphabet sizes~(\cite{Polak17} and Theorem~\ref{thm:lcwis}). On the other hand, for binary and ternary alphabets, even linear time algorithms exist~\cite{KutzBKK11,Duraj13}. Can LCWIS be solved in time $\Oh{n^{2-f(|\Sigma|)}}$ for some decreasing function $f$ that yields strongly subquadratic-time algorithms for any constant alphabet size $|\Sigma|$?

Finally, we can compute a $(1+\eps)$-approximation of LCIS in
$\Oh{n^{3/2}\eps^{-1/2}\polylog(n)}$
time by an easy observation
\onlySHORT{(see the appendix in the full version).}
\onlyFULL{(see the appendix).}
Can we improve upon this running time or give a matching conditional lower bound? Note that a positive resolution seems difficult by the reduction in Observation~\ref{obs:lcs2lcis}: Any $n^\alpha$, $\alpha > 0$, improvement over this running time would yield a strongly subcubic $(1+\eps)$-approximation for 3-LCS, which seems hard to achieve, given the difficulty to find strongly subquadratic $(1+\eps)$-approximation algorithms for LCS.

\bibliography{lcis}

\begin{thebibliography}{10}

\bibitem{AbboudBVW15}
Amir Abboud, Arturs Backurs, and Virginia Vassilevska~Williams.
\newblock Quad{\-}ratic-time hardness of {LCS} and other sequence similarity
  measures.
\newblock In {\em Proc.\ 56th Annual {IEEE} Symposium on Foundations of
  Computer Science ({FOCS}'15)}, pages 59--78, 2015.

\bibitem{AbboudHVWW16}
Amir Abboud, Thomas~Dueholm Hansen, Virginia Vassilevska~Williams, and Ryan
  Williams.
\newblock Simulating branching programs with edit distance and friends or: {A}
  polylog shaved is a lower bound made.
\newblock In {\em {Proc.\ 48th Annual {ACM} Symposium on Symposium on Theory of
  Computing (STOC'16)}}, 2016.
\newblock To appear.

\bibitem{AbboudVWW14}
Amir Abboud, Virginia~Vassilevska Williams, and Oren Weimann.
\newblock Consequences of faster alignment of sequences.
\newblock In {\em Proc. of 41st International Colloquium on Automata,
  Languages, and Programming ({ICALP}'14)}, pages 39--51, 2014.

\bibitem{AhoHU76}
Alfred~V. Aho, Daniel~S. Hirschberg, and Jeffrey~D. Ullman.
\newblock Bounds on the complexity of the longest common subsequence problem.
\newblock {\em Journal of the ACM}, 23(1):1--12, 1976.

\bibitem{AltschulGMML90}
Stephen~F. Altschul, Warren Gish, Webb Miller, Eugene~W. Myers, and David~J.
  Lipman.
\newblock Basic local alignment search tool.
\newblock {\em Journal of Molecular Biology}, 215(3):403 -- 410, 1990.

\bibitem{AnnYT14}
Hsing{-}Yen Ann, Chang{-}Biau Yang, and Chiou{-}Ting Tseng.
\newblock Efficient polynomial-time algorithms for the constrained {LCS}
  problem with strings exclusion.
\newblock {\em Journal of Combinatorial Optimization}, 28(4):800--813, 2014.

\bibitem{ApostolicoG87}
Alberto Apostolico and Concettina Guerra.
\newblock The longest common subsequence problem revisited.
\newblock {\em Algorithmica}, 2(1):316--336, 1987.

\bibitem{ArslanE05}
Abdullah~N. Arslan and {\"{O}}mer Egecioglu.
\newblock Algorithms for the constrained longest common subsequence problems.
\newblock {\em International Journal of Foundations of Computer Science},
  16(6):1099--1109, 2005.

\bibitem{BackursI15}
Arturs Backurs and Piotr Indyk.
\newblock Edit distance cannot be computed in strongly subquadratic time
  (unless {SETH} is false).
\newblock In {\em Proc.\ 47th Annual {ACM} Symposium on Theory of Computing
  ({STOC}'15)}, pages 51--58, 2015.

\bibitem{BackursI16}
Arturs Backurs and Piotr Indyk.
\newblock Which regular expression patterns are hard to match?
\newblock In {\em Proc. 57th Annual Symposium on Foundations of Computer
  Science, ({FOCS}'16)}, pages 457--466, 2016.

\bibitem{BackursT17}
Arturs Backurs and Christos Tzamos.
\newblock Improving viterbi is hard: Better runtimes imply faster clique
  algorithms.
\newblock In {\em Proc. 34th International Conference on Machine Learning
  (ICML'17)}, 2017.
\newblock To appear.

\bibitem{Barrington89}
David~A. Barrington.
\newblock Bounded-width polynomial-size branching programs recognize exactly
  those languages in {NC}1.
\newblock {\em Journal of Computer and System Sciences}, 38(1):150--164, 1989.

\bibitem{BensonLMNS16}
Gary Benson, Avivit Levy, S.~Maimoni, D.~Noifeld, and B.~Riva Shalom.
\newblock Lcsk: {A} refined similarity measure.
\newblock {\em Theoretical Computer Science}, 638:11--26, 2016.

\bibitem{BergrothHR00}
Lasse Bergroth, Harri Hakonen, and Timo Raita.
\newblock A survey of longest common subsequence algorithms.
\newblock In {\em Proc.\ 7th International Symposium on String Processing and
  Information Retrieval ({SPIRE'00})}, pages 39--48, 2000.

\bibitem{Bringmann14}
Karl Bringmann.
\newblock Why walking the dog takes time: Frechet distance has no strongly
  subquadratic algorithms unless {SETH} fails.
\newblock In {\em Proc.\ 55th Annual {IEEE} Symposium on Foundations of
  Computer Science ({FOCS}'14)}, pages 661--670, 2014.

\bibitem{BringmannK15}
Karl Bringmann and Marvin K{\"{u}}nnemann.
\newblock Quadratic conditional lower bounds for string problems and dynamic
  time warping.
\newblock In {\em Proc.\ 56th Annual {IEEE} Symposium on Foundations of Compu
  ter Science ({FOCS}'15)}, pages 79--97, 2015.

\bibitem{BringmannK17}
Karl Bringmann and Marvin K{\"{u}}nnemann.
\newblock Multivariate fine-grained complexity of longest common subsequence.
\newblock 2017.
\newblock Unpublished Manuscript.

\bibitem{ChanZFYZ07}
Wun{-}Tat Chan, Yong Zhang, Stanley P.~Y. Fung, Deshi Ye, and Hong Zhu.
\newblock Efficient algorithms for finding a longest common increasing
  subsequence.
\newblock {\em Journal of Combinatorial Optimization}, 13(3):277--288, 2007.

\bibitem{ChenC11}
Yi-Ching Chen and Kun-Mao Chao.
\newblock On the generalized constrained longest common subsequence problems.
\newblock {\em Journal of Combinatorial Optimization}, 21(3):383--392, 2011.

\bibitem{ChinSFHK04}
Francis Y.~L. Chin, Alfredo~De Santis, Anna~Lisa Ferrara, N.~L. Ho, and S.~K.
  Kim.
\newblock A simple algorithm for the constrained sequence problems.
\newblock {\em Information Processing Letters}, 90(4):175--179, 2004.

\bibitem{ChvatalKK72}
Vaclav Chvatal, David~A. Klarner, and Donald~E. Knuth.
\newblock Selected combinatorial research problems.
\newblock Technical Report CS-TR-72-292, Stanford University, Department of
  Computer Science, 6 1972.

\bibitem{CrochemoreP10}
Maxime Crochemore and Ely Porat.
\newblock Fast computation of a longest increasing subsequence and application.
\newblock {\em Information \& Computation}, 208(9):1054--1059, 2010.

\bibitem{CyganMWW17}
Marek Cygan, Marcin Mucha, Karol Wegrzycki, and Michal Wlodarczyk.
\newblock {On problems equivalent to (min,+)-convolution}.
\newblock In {\em Proc. 44th International Colloquium on Automata, Languages,
  and Programming (ICALP'17)}, 2017.
\newblock To appear.

\bibitem{Duraj13}
Lech Duraj.
\newblock A linear algorithm for 3-letter longest common weakly increasing
  subsequence.
\newblock {\em Information Processing Letters}, 113(3):94--99, 2013.

\bibitem{Fredman75}
Michael~L. Fredman.
\newblock On computing the length of longest increasing subsequences.
\newblock {\em Discrete Mathematics}, 11(1):29--35, 1975.

\bibitem{GotthilfHLL10}
Zvi Gotthilf, Danny Hermelin, Gad~M. Landau, and Moshe Lewenstein.
\newblock Restricted {LCS}.
\newblock In {\em Proc. 17th International Symposium on String Processing and
  Information Retrieval ({SPIRE}'10)}, pages 250--257, 2010.

\bibitem{Hirschberg77}
Daniel~S. Hirschberg.
\newblock Algorithms for the longest common subsequence problem.
\newblock {\em Journal of the ACM}, 24(4):664--675, 1977.

\bibitem{HuntMcI75}
J.~W. Hunt and M.~D. Mc{I}lroy.
\newblock An algorithm for differential file comparison.
\newblock Computing Science Technical Report~41, Bell Laboratories, 1975.

\bibitem{HuntS77}
James~W. Hunt and Thomas~G. Szymanski.
\newblock A fast algorithm for computing longest subsequences.
\newblock {\em Communications of the ACM}, 20(5):350--353, 1977.

\bibitem{ImpagliazzoP01}
Russell Impagliazzo and Ramamohan Paturi.
\newblock On the complexity of {k-SAT}.
\newblock {\em Journal of Computer and System Sciences}, 62(2):367--375, 2001.

\bibitem{ImpagliazzoPZ01}
Russell Impagliazzo, Ramamohan Paturi, and Francis Zane.
\newblock Which problems have strongly exponential complexity?
\newblock {\em Journal of Computer and System Sciences}, 63(4):512--530, 2001.

\bibitem{JacobsonV92}
Guy Jacobson and Kiem{-}Phong Vo.
\newblock Heaviest increasing/common subsequence problems.
\newblock In {\em Combinatorial Pattern Matching, Third Annual Symposium, {CPM}
  92, Tucson, Arizona, USA, April 29 - May 1, 1992, Proceedings}, pages 52--66,
  1992.

\bibitem{JiangLMZ04}
Tao Jiang, Guohui Lin, Bin Ma, and Kaizhong Zhang.
\newblock The longest common subsequence problem for arc-annotated sequences.
\newblock {\em Journal of Discrete Algorithms}, 2(2):257 -- 270, 2004.

\bibitem{KunnemannPS17}
Marvin {K{\"u}nnemann}, Ramamohan {Paturi}, and Stefan {Schneider}.
\newblock {On the Fine-grained Complexity of One-Dimensional Dynamic
  Programming}.
\newblock In {\em Proc. 44th International Colloquium on Automata, Languages,
  and Programming (ICALP'17)}, 2017.
\newblock To appear.

\bibitem{KutzBKK11}
Martin Kutz, Gerth~St{\o}lting Brodal, Kanela Kaligosi, and Irit Katriel.
\newblock Faster algorithms for computing longest common increasing
  subsequences.
\newblock {\em Journal of Discrete Algorithms}, 9(4):314--325, 2011.

\bibitem{MasekP80}
William~J. Masek and Mike Paterson.
\newblock A faster algorithm computing string edit distances.
\newblock {\em Journal of Computer and System Sciences}, 20(1):18--31, 1980.

\bibitem{Morgan70}
Howard~L. Morgan.
\newblock Spelling correction in systems programs.
\newblock {\em Communications of the {ACM}}, 13(2):90--94, 1970.

\bibitem{Myers86}
Eugene~W. Myers.
\newblock An {$O(ND)$} difference algorithm and its variations.
\newblock {\em Algorithmica}, 1(2):251--266, 1986.

\bibitem{NeedlemanW70}
Saul~B. Needleman and Christian~D. Wunsch.
\newblock A general method applicable to the search for similarities in the
  amino acid sequence of two proteins.
\newblock {\em Journal of Molecular Biology}, 48(3):443 -- 453, 1970.

\bibitem{Polak17}
Adam Polak.
\newblock Why is it hard to beat ${O}(n^2)$ for longest common weakly
  increasing subsequence?
\newblock {\em CoRR}, abs/1703.01143, 2017.

\bibitem{RodittyVW13}
Liam Roditty and Virginia Vassilevska~Williams.
\newblock Fast approximation algorithms for the diameter and radius of sparse
  graphs.
\newblock In {\em Proc.\ 45th Annual {ACM} Symposium on Symposium on Theory of
  Computing (STOC'13)}, pages 515--524, 2013.

\bibitem{Tsai03}
Yin{-}Te Tsai.
\newblock The constrained longest common subsequence problem.
\newblock {\em Information Processing Letters}, 88(4):173--176, 2003.

\bibitem{WagnerF74}
Robert~A. Wagner and Michael~J. Fischer.
\newblock The string-to-string correction problem.
\newblock {\em Journal of the ACM}, 21(1):168--173, 1974.

\bibitem{Williams05}
Ryan Williams.
\newblock A new algorithm for optimal 2-constraint satisfaction and its
  implications.
\newblock {\em Theoretical Computer Science}, 348(2):357--365, 2005.

\bibitem{VassilevskaW15}
Virginia~Vassilevska Williams.
\newblock Hardness of easy problems: Basing hardness on popular conjectures
  such as the strong exponential time hypothesis (invited talk).
\newblock In {\em Proc. 10th International Symposium on Parameterized and Exact
  Computation ({IPEC}'15)}, pages 17--29, 2015.

\bibitem{YangHC05}
I{-}Hsuan Yang, Chien{-}Pin Huang, and Kun{-}Mao Chao.
\newblock A fast algorithm for computing a longest common increasing
  subsequence.
\newblock {\em Information Processing Letters}, 93(5):249--253, 2005.

\bibitem{ZhuWWW16}
Daxin Zhu, Lei Wang, Tinran Wang, and Xiaodong Wang.
\newblock A simple linear space algorithm for computing a longest common
  increasing subsequence.
\newblock {\em CoRR}, abs/1608.07002, 2016.

\end{thebibliography}

\onlyFULL{
\clearpage
\section*{Appendix}

\begin{theorem}[folklore, generalization of~\cite{ZhuWWW16}]
For any $k \ge 2$, LCIS of $k$ sequences of length $n$ can be computed in $\Oh{n^k}$ time.
\end{theorem}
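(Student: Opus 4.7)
The plan is to generalize the classical quadratic-time dynamic program of Yang--Huang--Chao for LCIS (the $k=2$ case, on which the Zhu et al.\ formulation referenced here builds) to $k$ sequences, obtaining a $k$-dimensional DP with $n^k$ entries, each computable in $\Oh{1}$ amortized time.

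I would maintain a table $C[i_1, \ldots, i_k]$ equal to the length of the longest common strictly increasing subsequence of the prefixes $X_1[1..i_1], \ldots, X_k[1..i_k]$ that ends at position $i_k$ in $X_k$ (i.e., $X_k[i_k]$ is the last element, matched somewhere in each other prefix). The recurrence has two cases. If $X_j[i_j] \neq X_k[i_k]$ for some $j < k$, the match of $X_k[i_k]$ inside $X_j$ must occur at an earlier position, so $C[i_1, \ldots, i_k] = C[i_1, \ldots, i_j - 1, \ldots, i_k]$. If instead $X_1[i_1] = \cdots = X_k[i_k] = v$, we additionally have the option of using all final positions as the last match, giving
\[
C[i_1, \ldots, i_k] = 1 + \max\set{C[i_1 - 1, \ldots, i_{k-1} - 1, i_k'] : i_k' < i_k,\ X_k[i_k'] < v},
\]
which we combine (via max) with the reductions above and with $C[i_1, \ldots, i_k - 1]$. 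The final answer is $\max_{i_1, \ldots, i_k} C[i_1, \ldots, i_k]$.

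The key to the $\Oh{n^k}$ bound is to lift the running-maximum trick of Yang--Huang--Chao to $k$ dimensions. For each fixed outer tuple $(i_1, \ldots, i_{k-1})$ I would sweep $i_k$ from $1$ to $n$, maintaining a single variable $m$ equal to the maximum of $C[i_1 - 1, \ldots, i_{k-1} - 1, i_k']$ over the already-visited $i_k' < i_k$ satisfying $X_k[i_k'] < v$, where $v = X_1[i_1] = \cdots = X_{k-1}[i_{k-1}]$ is the unique ``threshold'' value at which a full match can arise in this sweep (if the outer values do not all agree, no match can happen and we just propagate the drop recurrences). At each $i_k$, if $X_k[i_k] < v$ we update $m \coloneqq \max(m,\ C[i_1 - 1, \ldots, i_{k-1} - 1, i_k])$, and if $X_k[i_k] = v$ we read off $1 + m$ as the match contribution. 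Each sweep takes $\Oh{n}$ time, there are $\Oh{n^{k-1}}$ outer tuples, and the $\Oh{k}$ drop transitions per cell add only a constant factor, yielding $\Oh{n^k}$ total time.

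The main obstacle is respecting the dependencies of the DP: the running max at outer tuple $(i_1, \ldots, i_{k-1})$ reads $C$-values at $(i_1 - 1, \ldots, i_{k-1} - 1, i_k')$, which must already be computed. A lexicographic iteration over $(i_1, \ldots, i_{k-1})$ with the $i_k$-sweep nested inside resolves this, exactly mirroring the $k=2$ algorithm. The remaining verification is routine: one must confirm that the two recurrence cases cover all alignments of the last element of the LCIS, and that the choice of the largest matching position in each sequence (captured by the ending constraint plus the drop transitions) does not suboptimally restrict earlier matches.
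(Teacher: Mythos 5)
Your proposal is essentially the paper's own proof: the same table indexed by all $k$ prefix lengths and constrained so that the common subsequence ends with $X_k[i_k]$, the same two-case recurrence (drop an index $i_j$ with $X_j[i_j]\neq X_k[i_k]$, or, when all $k$ final symbols agree, append that symbol on top of a running maximum), and the same $\Oh{n}$ inner sweep per outer tuple giving $\Oh{n^k}$ overall. One small inconsistency to fix: under your stated definition (the subsequence must end at position $i_k$ of $X_k$), the equal-symbols case is exactly $1+\max\set{C[i_1-1,\ldots,i_{k-1}-1,i_k'] : i_k'<i_k,\ X_k[i_k']<v}$, and you should not additionally take the max with $C[i_1,\ldots,i_k-1]$ — that entry describes subsequences ending at a different position, and folding it into the table would let the match recurrence later append $v$ after a last element that need not be smaller than $v$.
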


\begin{proof}
Let $X_1, X_2, \ldots X_k$ be the input sequences. Let $X[0:i]$ denote the prefix consisting of first $i$ elements of $X$, with $X[0:0]$ being the empty prefix. Now, for every $i_1, \ldots, i_k \in \{0, 1, \ldots, n\}$ we define $R[i_1, \ldots, i_k]$ to be the length of the LCIS of the prefixes  $X_1[0:i_1], X_2[0:i_1], \ldots, X_k[0:i_k]$ with an additional assumption that this common subsequence must end with the element $X_k[i_k]$, i.e. the last element of the last prefix. Observe that it is enough to compute all $R[i_1, \ldots, i_k]$, with the desired answer being simply $\max_{0 \leq i_k \leq n} R[n, \ldots, n, i_k]$.

The algorithm is based on the fact that $R[i_1, \ldots, i_k]$ satisfies the following two-case recurrence:

\begin{itemize}
   \item   \textbf{Case 1.} If $X_k[i_k] \neq X_s[i_s]$ for some $s < k$, the desired common subsequence ends with $X_k[i_k]$ and thus it cannot contain $X_s[i_s]$, so $R[i_1, \ldots, i_s, \ldots, i_k] = R[i_1, \ldots, i_s - 1, \ldots, i_k]$.
   \item   \textbf{Case 2.} If $X_1[i_1] = \ldots = X_k[i_k]$, let us call this common symbol $\sigma$, and observe that $\sigma$ is the last element of LCIS. Consider the next-to-last element: it must be certainly smaller than $\sigma$, and must appear in the $X_k$ sequence at a position earlier than at $X_k[i_k]$. Therefore $R[i_1, \ldots, i_k] = 1 + \max_{j < i_k, X_k[j] < \sigma} R[i_1 - 1, \ldots, i_{k-1}-1, j]$.
\end{itemize}

To obtain the values of $R$ in $\Oh{n^k}$ time, the algorithm iterates through all possible $i_1, \ldots, i_k$ with the $i_k$ loop being the innermost one. Obviously, $R[i_1, \ldots, i_k] = 0$ if any of the indices is $0$. Before every innermost $i_k$ loop, with fixed $i_1, i_2, \ldots, i_{k-1}$, the algorithm checks whether $X_1[i_1] = \ldots = X_{k-1}[i_{k-1}]$. If so, it sets $\sigma = X_1[i_1] = \ldots = X_{k-1}[i_{k-1}]$, otherwise $\sigma = \nullv$.

If $\sigma \neq \nullv$, for every $1 \leq i \leq n$ let $D[i] = \max_{j < i, X_k[j] < \sigma} R[i_1 - 1, \ldots, i_{k-1} - 1, j]$. Observe that $D[i]$ can be obtained from $D[i - 1]$ and $R[i_1 - 1, \ldots, i_{k-1} - 1, i - 1]$ in constant time. Therefore, before the start of the $i_k$ loop, the algorithm can precompute all the $D[i]$ values in $\Oh{n}$ time, as all the needed $R[i_1 - 1, \ldots, i_{k-1} - 1, i]$ values are already known from earlier iterations. 

Throughout the $i_k$ loop the algorithm checks if $X_k[i_k] = \sigma$, which corresponds to Case 2 above. If so, then $R[i_1, \ldots, i_k] = 1 + \max_{j < i_k, X_k[j] < \sigma} R[i_1 - 1, \ldots, i_{k-1}-1, j] = 1 + D[i_k]$, which is already precomputed. If Case 1 holds, then $R[i_1, \ldots, i_s, \ldots, i_k] = R[i_1, \ldots, i_s - 1, \ldots, i_k]$ for some $s < k$. As the index $s$ is easy to find, and the necessary values in $R$ have been computed earlier, this step also works in constant time (assuming $k$ is fixed).

The above algorithm computes only the length of LCIS. However, it can be easily modified to reconstruct the sequence, using the common dynamic programming techniques (e.g. by storing with every value in $R$ a link to the previous element of LCIS).
\end{proof}

\begin{theorem}
A $(1 + \eps)$-approximation of LCIS of sequences $X,Y$ of length $n$ can be computed in $\Oh{n^{3/2}\eps^{-1/2}\polylog(n)}$ time.
\end{theorem}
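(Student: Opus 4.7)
Plan:

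I would combine Kutz et al.'s exact LCIS algorithm~\cite{KutzBKK11}, which runs in $\Oh{nL\log\log n + n\log n}$ time for the true LCIS length $L$, with a threshold dispatch. Set $T = \lceil\sqrt{n/\eps}\,\rceil$. First, I would run a truncated variant of Kutz et al.'s algorithm that aborts as soon as the currently maintained LCIS length exceeds $T$. This computes $\min(L, T)$ in $\Oh{nT\polylog(n)} = \Oh{n^{3/2}\eps^{-1/2}\polylog(n)}$ time. If the returned value $\ell$ is strictly less than $T$, then $\ell = L$ exactly, and we report it.

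Otherwise we know $L \geq T = \sqrt{n/\eps}$, giving us additive slack $\eps L \geq \sqrt{n\eps}$ permitted by the $(1+\eps)$-approximation requirement. Here I would compress $(X, Y)$ into a shorter pair $(X', Y')$ with $|X'|, |Y'| = \Oh{T\polylog(n)}$ satisfying $\lcis(X', Y') \geq L/(1+\eps)$, and then run exact LCIS on $(X', Y')$ in $\Oh{T^2\polylog(n)} = \Oh{n\eps^{-1}\polylog(n)}$ time. A natural compression uses the LIS-length labels $\lambda_X(i), \lambda_Y(j)$, computable in $\Oh{n\log n}$ via patience sorting: since the LCIS of $(X, Y)$ picks positions with strictly increasing $\lambda$-labels running from $1$ up to $L$ on both sides, bucketing positions by $\lambda$-value on a $(1+\eps)$-multiplicative grid and keeping $\Oh{1}$ representatives per bucket should preserve at least a $(1+\eps)^{-1}$-fraction of the LCIS.

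The main obstacle will be the compression analysis: we must argue simultaneously that (a) the $L$-long LCIS induces an LCIS of length at least $L/(1+\eps)$ on the landmark instance, and (b) landmarks on the $X$-side match landmarks on the $Y$-side in value, not only in $\lambda$-label. For (a), multiplicative bucketing of the $\Oh{\log L}$ label levels loses at most a $(1+\eps)^{-1}$ multiplicative factor by a standard telescoping argument. For (b), I would use a joint selection procedure that scans both sequences in tandem and enforces value-consistency, so that each retained $\lambda$-bucket contributes a real match rather than only a label match. An alternative avenue worth exploring is to directly modify Kutz et al.'s DP to maintain only a coarsened length axis $\ell \in \{1, 2, \ldots, T\} \cup \{T(1+\eps), T(1+\eps)^2, \ldots\}$, propagating $(1+\eps)$-approximate values throughout --- this would bypass the explicit compression step but would require verifying that the approximation error does not compound across the DP.
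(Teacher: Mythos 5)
Your easy case is fine and coincides with the paper's: when $L\le\sqrt{n/\eps}$, the output-sensitive $\Oh{nL\log\log n+n\log n}$ algorithm of Kutz et al.\ (truncated at threshold $T$) computes the exact answer within budget. The genuine gap is in the main case $L\ge\sqrt{n/\eps}$, where your compression step does not hold up. Keeping $\Oh{1}$ representatives per bucket of a $(1+\eps)$-multiplicative grid over the $\lambda$-labels leaves only $\Oh{\eps^{-1}\log n}$ positions, which cannot contain an increasing subsequence of length $L/(1+\eps)\ge\sqrt{n/\eps}/(1+\eps)$; so either the landmark set is far too small to certify anything, or it is not of the size $\Oh{T\,\polylog(n)}$ you claim. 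More fundamentally, the obstacle you yourself label (b) --- that representatives chosen on the $X$-side and the $Y$-side must agree in \emph{value}, not merely in $\lambda$-label --- is the entire difficulty, and ``a joint selection procedure that scans both sequences in tandem'' is a placeholder, not an argument: a value can occur at many positions with many different labels on each side, and there is no evident way to select $\Oh{\sqrt{n/\eps}\cdot\polylog(n)}$ positions per sequence that preserve a $(1-\Oh{\eps})$-fraction of every (unknown) optimal alignment. Constructing such a sparsifier looks at least as hard as the problem itself, and your fallback of coarsening the length axis of the DP is likewise left unverified on exactly the point (error compounding) where it would fail or succeed.

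The paper handles the large-$L$ case with a different and much simpler idea: delete every integer occurring more than $2\sqrt{n/\eps}$ times in total in $X$ and $Y$. At most $\sqrt{n\eps}$ integers are deleted, and since an increasing subsequence uses each value at most once, the LCIS drops by at most $\sqrt{n\eps}\le\eps L$. The surviving instance has $\Oh{n^{3/2}\eps^{-1/2}}$ matching pairs, so the matching-pair-sensitive algorithm of Chan et al., running in $\Oh{M\log L\log\log n+n\log n}$ time, computes its LCIS \emph{exactly} within the stated budget; if the value returned exceeds $\sqrt{n/\eps}$ it is already a $(1+\eps)$-approximation, and otherwise $L\le(1+\eps)\sqrt{n/\eps}$ and one falls back to the exact output-sensitive algorithm, i.e.\ your easy case. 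To repair your proof, replace the label-bucketing compression with this frequency-truncation step.
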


\begin{proof}
First, delete all integers occurring more than $2\sqrt{n/\eps}$ times
in total in both of the sequences.
Since there are at most $\sqrt{n\eps}$ such integers, this operation decreases
the length of the LCIS by at most $\sqrt{n\eps}$.
In the resulting instance, there are at most $n^{3/2}\eps^{-1/2}$ matching pairs, i.e., indices $i,j$ with $X[i] = Y[j]$. Thus, the exact LCIS in this instance can be computed in time 
$\Oh{n^{3/2}\eps^{-1/2}\log n\log\log n}$ using an algorithm of Chan et al.~\cite{ChanZFYZ07} running in time $\Oh{M \log L \log \log n + n \log n}$, where $L$ is the length of the LCIS of $X$ and $Y$ and $M$ is the number of matching pairs. 
Now, consider two cases. If the algorithm returns a solution $Z$ longer than $\sqrt{n/\eps}$,
then $Z$ is a $(1+\eps)$-approximation of the LCIS of the original instance, since the LCIS is bounded by $L \le |Z| + \sqrt{n \eps} \le (1+\eps)|Z|$.
In the remaining case, it is guaranteed that $L \le |Z| + \sqrt{n\eps} \le (1+\eps)\sqrt{n/\eps}$. Thus, we may
compute the exact LCIS in $\Oh{n^{3/2}\eps^{-1/2}\log n}$ time using
the algorithm running in $\Oh{nL\log \log n + n\log n}$ time~\cite{KutzBKK11}.
\end{proof}

}  

\end{document}